 \newtheorem{theorem}{Theorem}
\begin{document}
\setlength{\topskip}{-3pt}

\newtheorem{lemma}{Lemma}
\newtheorem{proposition}{Proposition}
\newtheorem{remark}{Remark}

\title{\huge Integrating Sensing, Communication, and Power Transfer: Multiuser Beamforming Design}
\author{Ziqin Zhou, \emph{Student Member, IEEE}, Xiaoyang Li, \emph{Member, IEEE}, Guangxu Zhu, \emph{Member, IEEE}, Jie Xu, \emph{Senior Member, IEEE}, Kaibin Huang, \emph{Fellow, IEEE}, and Shuguang Cui, \emph{Fellow, IEEE}
\thanks{Part of this paper has been presented on the IEEE Wireless Communications and Networking Conference (WCNC), Glasgow, UK, 2023. \cite{li2023multi}

Ziqin Zhou, Xiaoyang Li, and Guangxu Zhu are with the Shenzhen Research Institute of Big Data, The Chinese University of Hong Kong-Shenzhen, Guangdong, China. Jie Xu and Shuguang Cui are with the Chinese University of Hong Kong, Shenzhen, China. Kaibin Huang is with The University of Hong Kong, Hong Kong. Corresponding author: Xiaoyang Li (lixiaoyang@sribd.cn).
}
}
\maketitle

\begin{abstract}
In the \emph{sixth-generation} (6G) networks, massive low-power devices are expected to sense environment and deliver tremendous data. To enhance the radio resource efficiency, the \emph{integrated sensing and communication} (ISAC) technique exploits the sensing and communication functionalities of signals, while the \emph{simultaneous wireless information and power transfer} (SWIPT) techniques utilizes the same signals as the carriers for both information and power delivery. The further combination of ISAC and SWIPT leads to the advanced technology namely \emph{integrated sensing, communication, and power transfer} (ISCPT). In this paper, a multi-user \emph{multiple-input multiple-output} (MIMO) ISCPT system is considered, where a base station equipped with multiple antennas transmits messages to multiple \emph{information receivers} (IRs), transfers power to multiple \emph{energy receivers} (ERs), and senses a target simultaneously. The sensing target can be regarded as a point or an extended surface. When the locations of IRs and ERs are separated, the MIMO beamforming designs are optimized to improve the sensing performance while meeting the communication and power transfer requirements. The resultant non-convex optimization problems are solved based on a series of techniques including Schur complement transformation and rank reduction. Moreover, when the IRs and ERs are co-located, the power splitting factors are jointly optimized together with the beamformers to balance the performance of communication and power transfer. To better understand the performance of ISCPT, the target positioning problem is further investigated. Simulations are conducted to verify the effectiveness of our proposed designs, which also reveal a performance tradeoff among sensing, communication, and power transfer.

\end{abstract}

\begin{IEEEkeywords}
Integrated sensing and communication (ISAC), simultaneous wireless information and power transfer (SWIPT), multiple-input multiple-output (MIMO), beamforming, multi-user.
\end{IEEEkeywords}

\section{Introduction}
The \emph{sixth-generation} (6G) networks are expected to provide various sensing and communication services to support various emerging \emph{internet-of-things} (IoT) applications such as smart city and smart home \cite{saad2019vision}. Conventionally, the wireless sensing and communication systems are separately designed and may operate over different frequency bands to avoid mutual interference. To improve the spectrum efficiency and facilitate the data collection process, the \emph{integrated sensing and communication} (ISAC) technology utilizes the same frequency band to generate signals for both data transmission and radar sensing \cite{cui2021integrating}. In practice, the spectrum supporting ISAC ranges from the S-band (2-4 GHz) to the THz band \cite{liu2022integrated}. 

Wireless signals are not only the carriers of information but also the carriers of energy \cite{zhang2018wireless}. According to Koomey’s law, the reduction in power requirements of the electronics will result in explosive growth of low-power devices \cite{koomey2010implications}. For these low-power devices, \emph{wireless power transfer} (WPT) is expected to supply the energy via radio-frequency signals \cite{zeng2017communications}. By integrating the information and energy deliveries, \emph{simultaneous wireless information and power transfer} (SWIPT) has been proposed \cite{clerckx2018fundamentals}. However, existing literatures about WPT and SWIPT mainly focus on utilizing wireless signals as carriers of communication and power, while their sensing functionality has been overlooked.

In future 6G networks, the sensing, communication, and power transfer functionalities are expected to be integrated to enhance the radio resource efficiency and enable the data communication by massive low-power devices, which leads to the new research direction termed \emph{integrating sensing, communication, and power transfer} (ISCPT). The initial study on ISCPT has characterized the Pareto boundary of the sensing, communication, and power transfer functionalities for single \emph{information receiver} (IR) and single \emph{energy receiver} (ER) \cite{chen2022isac}. In current work, the particular \emph{multiple-input multiple-output} (MIMO) beamforming design for multiple IRs and multiple ERs is investigated to improve the sensing performance while guaranteeing the communication and power transfer requirements. The cases where the IRs and ERs are separated or co-located are both considered. For a co-located receiver, it harvests the energy and receives the information simultaneously. For the separated receivers, IR and ER are different devices with the former receiving the information and the latter harvesting the energy. 

The main contributions of this work are summarized below.
\begin{itemize}
\item \emph{Performance tradeoff among sensing, communication, and power transfer:} The performance tradeoff among sensing, communication, and power transfer in the scenario with multiple IRs and ERs are considered. According to the scale of the sensing target and its distance to the signal source, two different target models are considered, known as the point and extended targets, respectively. The sensing performance of the point target is evaluated by the \emph{Cram\'er-Rao Bound} (CRB) of the target parameters including the reflection coefficient and angel. The sensing performance of the extended target is evaluated by the \emph{mean squared error} (MSE) of the \emph{target response matrix} (TRM) estimation. The communication performance is evaluated by the \emph{signal-to-interference-plus-noise ratio} (SINR) of the IRs. The power transfer performance is evaluated by the harvested energy of the ERs. It is found that for both point target and extended target, increasing one of the performance metrics results in the deterioration of the others.

\item \emph{ISCPT beamforming design for the point target:} As for the point target, the beamforming is design for minimizing the sensing CRB while guaranteeing the energy requirements and the SINR requirements. To deal with the resultant non-convex optimization problem, the \emph{semidefinite relaxation} (SDR) and Schur complement transformation are applied. Subsequently, the solution approach based on \emph{rank reduction} (RR) method is proposed to derive a low-rank solution. Moreover, when the ERs and IRs are co-located, \emph{power splitting} (PS) method is adopted to divide the received signal for information decoding and \emph{energy harvesting} (EH), respectively. To balance the performance of communication and power transfer, the PS ratios of all signals are further considered as optimization variables together with the beamformers, which makes the problem more difficult to be solved. To deal with such a problem, auxiliary variables are introduced. Due to the existence of PS, the sensing performance is more susceptible to the varying energy and SINR requirements in the co-located case compared with the separated case.
     
\item \emph{ISCPT beamforming design for the extended target:} As for the extended target, the objective of the beamforming optimization problem is replaced by the sensing MSE. The solving approach in point target case becomes ineffective due to the existence of the inverse of the covariance matrix. Therefore, an alternative method is proposed to solve the problem by exploiting the rank property of the covariance matrix. The proposed method can effectively reduce the sensing error under the same constraints compared with the conventional scheme based on eigenmode decomposition. Moreover, the energy and SINR requirements as well as the amount of IRs further affect the sensing performance via the beamforming design.

\item \emph{Target positioning based on ISCPT:} To quantify the performance of ISCPT design, the use case of target positioning is investigated. Specifically, the target location is estimated based on the information extracted from the echo signals. It can be observed that the ISCPT signals can achieve accurate target positioning while guaranteeing the low SINR requirements at the IRs and low EH requirements at the ERs, while the performance of target positioning degrades when the SINR or EH requirement becomes high.
\end{itemize}

The remainder of this paper is organized as follows. Section II provides a comprehensive review of the existing literatures about ISAC, SWIPT, and target positioning. The system models of ISCPT are specified in Section III. The beamforming designs for the point target and the extended target estimations are provided by Section IV and Section V, respectively. The target positioning based on ISCPT is illustrated in Section VI. Simulations are conducted in Section VII to verify the performance of our proposed schemes, followed by a conclusion of this paper in Section VIII.


\section{Related Works}
\subsection{ISAC and Positioning}
The inception of ISAC can be traced back to the pioneering efforts in integrating radar and communication systems, wherein data was encoded within a cluster of radar pulses \cite{liu2022survey}. From the viewpoint of information theory, the rate distortion theory was applied to unifying the radar and communication performance \cite{chiriyath2015inner}. The particular waveform design for ISAC has been proposed in \cite{liu2018toward} to support simultaneous information transmission and target detection. Furthermore, the ISAC design was developed for \emph{multiple-input multiple-output} (MIMO) systems \cite{hua2023mimo}. The beamformers were designed to improve the ISAC efficiency by exploiting the spatial diversity \cite{hua2023optimal}. Taking the multi-user communication into account, two MIMO beamforming designs for ISAC were proposed in \cite{liu2018mu}, namely shared and separated designs.

The advantages of spectrum sharing have made ISAC a widely adopted technology in various systems, including \emph{reconfigurable intelligent surface} (RIS) systems \cite{wang2021joint}, edge learning systems \cite{zhang2022accelerating}, millimeter-wave systems \cite{kumari2017ieee}, smart homes \cite{huang2020joint}, \emph{Internet of Things} (IoT) \cite{li2023over}, vehicular networks \cite{yuan2020bayesian}, and \emph{unmanned aerial vehicle} (UAV) systems \cite{lyu2022joint}. To mitigate multi-user interference in ISAC, a joint design of waveform and discrete phase shift was conducted based on the deployment of RIS \cite{wang2021joint}. ISAC was also applied to accelerate the edge learning process by designing wireless signals for dual purposes of dataset generation and uploading \cite{zhang2022accelerating}. In millimeter-wave band, an IEEE 802.11ad-based radar was deployed to support both automotive radar functionality and communication network \cite{kumari2017ieee}. In smart homes, traditional sensing devices were endowed with communication capabilities, while the sensing ability of WiFi signals was enhanced \cite{huang2020joint}. The sensing and communication capabilities of mobile devices were further exploited for supporting the IoT applications \cite{li2023over}. In vehicular networks, wireless sensing capabilities were utilized to obtain vehicle states and facilitate communication \cite{yuan2020bayesian}. In UAV-enabled ISAC systems, the joint optimization of maneuver and beamforming designs were investigated for simultaneous communication with multiple users and sensing of potential targets \cite{lyu2022joint}. To facilitate the data collection process, ISAC is further integrated with over-the-air computation \cite{li2023integrated}.

As a vital application scenario of sensing, target positioning has been widely investigated in a series of literatures. The new radio positioning method was introduced in \cite{parkvall20205g} based on the use of a location server, where the positioning can be based on downlink or uplink. A new visible light communications based indoor positioning system was proposed in \cite{lin2017experimental}. The authors of \cite{bai2019camera} proposed a novel camera assisted received signal strength ratio positioning algorithm. In \cite{zhu2017three}, a novel positioning framework based on the angle differences of arrival in three-dimensional coordinate systems was introduced. An improved location tracking algorithm was designed in \cite{lee2006improved} with velocity estimation in cellular radio networks. In \cite{wang2016csi}, a novel deep-learning-based indoor fingerprinting system using channel state information was presented. 

Despite the rich literature on ISAC, the power supply for ISAC devices is largely overlooked. Moreover, the radio signals are not only the carriers of information but also energy. Therefore, it is natural to integrate the operations of sensing, communication, and power transfer.

\subsection{WPT and SWIPT}
Originally intended for point-to-point power transmission, the WPT technology has incorporated cutting-edge methods from electromagnetic radiation and inductive coupling \cite{zeng2017communications}. The practical experiments have verified that WPT can charge a variety of low-power wearables and mobile devices, including electronic watches, IoT sensors, hearing aids, and wireless keyboards \cite{lu2015wireless}. In particular, a \emph{radio-frequency} (RF) circuit has been developed in \cite{jabbar2010rf} to power mobile devices in urban networks with high density of RF sources. To power mobiles in cellular networks, a RF power beacon was further deployed in \cite{huang2014enabling}. As a promising replacement of the charging cables, WPT has been widely adopted to power the devices in diversified scenarios, including the mobile edge computing \cite{you2016energy}, fast data aggregation \cite{li2019wirelessly}, mobile crowd sensing \cite{li2018wirelessly}, and ISAC \cite{li2022wirelessly}. 


Integrating the information and power transfer further gave rise to the emerging field of SWIPT \cite{clerckx2018fundamentals}. In SWIPT, energy and information are simultaneously delivered from one or multiple transmitter(s) to one or multiple receiver(s) . To improve the spatial efficiency, MIMO SWIPT has been investigated in \cite{zhang2013mimo} with one IR and one ER, where the transmit waveform was designed to balance the performance of communication and power transfer. The waveform design was further extended to account for the case with multiple IRs and ERs in \cite{xu2014multiuser}. A series of studies focused on applying SWIPT to various communication systems and networks, including two-way transmissions \cite{popovski2013interactive}, MIMO communications \cite{park2013joint}, and cognitive networking~\cite{ng2016multiobjective}. 


Despite the wide applications of WPT and SWIPT in communication systems, the incorporation of data sensing is seldom investigated, which deserves to be explored.


\subsection{ISCPT}
The combination of ISAC and SWIPT leads to the emerging technology namely ISCPT. In \cite{li2022wirelessly}, a power beacon was applied for delivering energy to multiple devices and thus enabling their sensing and communication capabilities. In \cite{chen2022isac}, the fundamental performance tradeoff among sensing, communication, and power transfer was unveiled in a MIMO system, where the access point simultaneously senses a target, delivers message to an IR and energy to an ER. The work of \cite{li2023multi} further considered the scenario including multiple IRs and ERs. The extension to multiple sensing targets was investigated in \cite{zeng2022beamforming}. Compared with the initial studies, this paper provides a more elaborative illustration of the ISCPT system for specific types of the sensing targets as well as the locations of IRs and ERs.

\begin{figure}[t]
  \centering
  \subfigure[]{
  \label{FigSeparated}
  \includegraphics[scale=0.6]{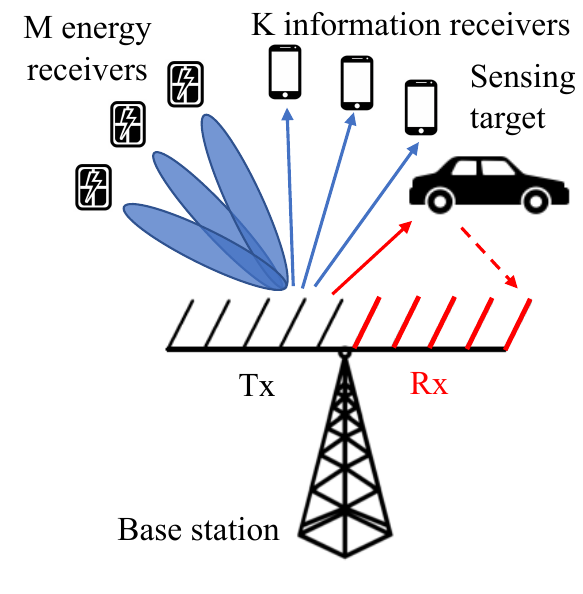}}
  \subfigure[]{
  \label{FigColocated}
  \includegraphics[scale=0.6]{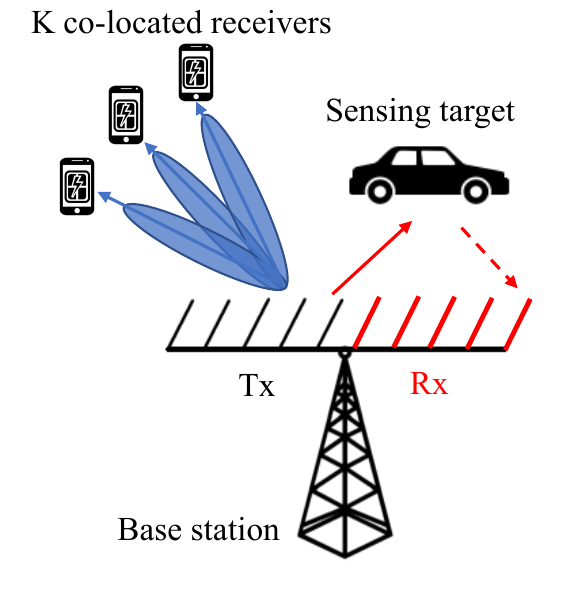}}
  \caption{ISCPT system with a) separated; b) co-located IRs and ERs}
  \label{FigSys}
\end{figure}

\section{System Model}
Consider an ISCPT system comprising one \emph{base station} (BS) equipped with $N_t$ transmit antennas and $N_r$ receive antennas to sense one target, deliver information to $K < N_t$ single-antenna IRs and wirelessly charge $M < N_t$ single-antenna ERs. As depicted in Fig.~\ref{FigSys}, the locations of ERs can be different from or the same as those of IRs, which is known as the separated or co-located case. The specific models of transmitted and received signals are elaborated below.

\subsection{Model of Signals Transmitted by the BS}
The signal transmitted by the BS over the $T \geq N_t$ symbol intervals is denoted as $\bold{X} \in \mathbb{C}^{N_t \times T}$ and can be expressed as
\begin{equation} \label{eq:Tx}
\bold{X} = \bold{W}_D \bold{S}, 
\end{equation}
where $\bold{W}_D = [\bold{w}_1,\bold{w}_2,...,\bold{w}_K] \in \mathbb{C}^{N_t \times K}$ is the beamforming matrix with $K$ columns being the beamformers for the IRs and wirelessly charge ERs simultaneously. $\bold{S} \in \mathbb{C}^{K \times T}$ contains $K$ unit-power data streams intended for $K$ IRs. The signal streams are assumed to be independent of each other, i.e.,
\begin{equation}
\mathbb{E}[\bold{S}\bold{S}^H] = \bold{I}_{K},
\end{equation} 
where $\bold{I}_{K}$ is an identity matrix with dimension $K$. The transmit power can be written as
\begin{equation}
\mathbb{E}[\text{tr}(\bold{X}\bold{X}^H)] = \text{tr}(\bold{W}_D \bold{W}_D^H) = \text{tr}\left(\sum_{k=1}^{K} \bold{w}_k \bold{w}_k^H\right).
\end{equation} 

\subsection{Model of Radar Sensing}
At the BS, the echo signal reflected by the extended target is denoted by $\bold{Y}_R \in \mathbb{C}^{N_r \times T}$, which can be expressed as 
\begin{equation}
\bold{Y}_R = \bold{G} \bold{X} + \bold{N}_R, 
\end{equation} 
where $\bold{N}_R$ is the \emph{additive white Gaussian noise} (AWGN) matrix with the variance of each entry being $\sigma_R^2$, and $\bold{G} \in \mathbb{C}^{N_r \times N_t}$ is the TRM between the BS and the target, which can be of different forms depending on the type of targets. We first consider the case where the target is regarded as a point, while the case of extended target is investigated in Section V. The TRM of a point target is specified as
\begin{equation}
\bold{G} = \alpha \bold{A}(\theta), 
\end{equation} 
where $\alpha$ represents the reflection coefficient, $\theta$ is the azimuth angle of the target relative to the BS, and $\bold{A}(\theta) = \bold{b}(\theta) \bold{a}^H(\theta)$ is formed by multiplexing the steering vector $\bold{a}(\theta) \in \mathbb{C}^{N_t \times 1}$ of the transmit antennas and $\bold{b}(\theta) \in \mathbb{C}^{N_r \times 1}$ of the receive antennas. The transmit and receive antennas are assumed to be \emph{uniform linear array} (ULA) with half-wavelength antenna spacing. Therefore, we have \begin{equation}
    \bold{a}(\theta) = [e^{-j \frac{N_{t} - 1}{2} \pi \sin\theta} ,e^{-j \frac{N_{t} - 3}{2} \pi \sin\theta},...,e^{j \frac{N_{t} - 1}{2} \pi \sin\theta}]^T,
    \end{equation}
    \begin{equation}
\bold{b}(\theta) = [e^{-j \frac{N_{r} - 1}{2} \pi \sin\theta} ,e^{-j \frac{N_{r} - 3}{2} \pi \sin\theta},...,e^{j \frac{N_{r} - 1}{2} \pi\sin\theta}]^T.
    \end{equation}
According to \cite{liu2021cramer}, the CRB for estimating angle $\theta$ is expressed as
\begin{equation}
\text{CRB}(\theta) = \frac{\sigma_R^2 \text{tr}(\bold{A}^H(\theta)\bold{A}(\theta)\bold{R}_X)}{2|\alpha|^2 T (\text{tr}(\dot{\bold{A}}^H(\theta)\dot{\bold{A}}(\theta)\bold{R}_X)\text{tr}(\bold{A}^H(\theta)\bold{A}(\theta)\bold{R}_X)-|\text{tr}(\dot{\bold{A}}^H(\theta)\bold{A}(\theta)\bold{R}_X)|^2)}, 
\end{equation} 
where $\dot{\bold{A}}(\theta) = \frac{\partial \bold{A}(\theta)}{\partial \theta}$, and $\bold{R}_X = \frac{1}{T} \bold{X} \bold{X}^H = \frac{1}{T} \bold{W}_D \bold{S} \bold{S}^H \bold{W}_D^H$. When the transmission duration $T$ becomes sufficiently long, we have $\bold{R}_X = \bold{W}_D \bold{W}_D^H = \sum_{k=1}^{K} \bold{w}_k \bold{w}_k^H$.

\subsection{Model of Signals Received by the ERs and IRs}
The model of received signals has different forms depending on the locations of receivers. In particular, we will discuss the following two models:
\begin{itemize}
\item {\bf Separated ERs and IRs}: In this case, the locations of ERs are different from those of IRs. The received signal matrix at $K$ IRs is denoted by $\bold{Y}_C \in \mathbb{C}^{K \times T}$, which is expressed as
\begin{equation}
\bold{Y}_C = \bold{H} \bold{X} + \bold{N}_C, 
\end{equation} 
where $\bold{N}_C$ is the AWGN matrix with the variance of each entry being $\sigma_C^2$ and $\bold{H} = [\bold{h}_1,\bold{h}_2,...,\bold{h}_K]^H \in \mathbb{C}^{K \times N_t}$ is the communication channel matrix, which is assumed to be perfectly known at the BS. For the $k$-th IR, the communication performance is measured by its receive SINR denoted by 
\begin{equation}
\gamma_k = \frac{|\bold{h}_k^H\bold{w}_k|^2}{\sum_{i=1,i \neq k}^{K} |\bold{h}_k^H\bold{w}_i|^2 + \sigma_C^2}.
\end{equation}
The signal received at the $m$-th ER is denoted by $\bold{y}_m \in \mathbb{C}^{1 \times T}$, which can be expressed as
\begin{equation}
\bold{y}_m = \bold{c}_m \bold{X} + \bold{n}_m, 
\end{equation} 
where $\bold{c}_m \in \mathbb{C}^{1 \times N_t}$ is the channel between the BS and the $m$-th ER, $\bold{n}_m \in \mathbb{C}^{1 \times T}$ is the AWGN vector. Due to the negligible power of noise compared with those of signals, the received RF energy can be expressed as\footnote{In practice, the ER uses a rectifier to convert the received RF to the \emph{direct-current} (DC) signals for charging the battery. Although the RF-to-DC conversion is a non-linear process, the harvested DC power is generally a monotonically increasing function with respect to the received RF power. Therefore, the EH constraints on DC power can be equivalently rewritten as constraints on RF power.}
\begin{equation}
E_m = \beta_m \mathbb{E}[|\bold{c}_m \bold{X}|^2] T = \beta_m \bold{c}_m \left(\sum_{k=1}^{K}\bold{w}_k\bold{w}_k^H\right)\bold{c}_m^H T,
\end{equation}
where $\beta_m \in [0,1]$ is the EH coefficient. 

\item {\bf Co-located ERs and IRs}: In this case, the IR and ER are colocated, i.e.,  there are $M = K$ ERs co-located with IRs. A power splitter is employed to split the received RF power into two portions for information decoding and energy harvesting. Let $\rho_k \in [0,1]$ denote the portion of signal power split for EH, the energy harvested by the $k$-th receiver is expressed as
\begin{equation}
E_k = \rho_k \beta_k \mathbb{E}[|\bold{h}_k \bold{X}|^2] T  = \rho_k \beta_k \bold{h}_k^H \left(\sum_{k=1}^{K}\bold{w}_k\bold{w}_k^H\right)\bold{h}_k T.
\end{equation}
Therefore, the portion of signal power split for information receiving at the $k$-th receiver is $(1 - \rho_k)$, and thus the SINR can be written as
\begin{equation}
\gamma_k = \frac{(1 - \rho_k)|\bold{h}_k^H\bold{w}_k|^2}{(1 - \rho_k)\sum_{i=1,i \neq k}^{K} |\bold{h}_k^H\bold{w}_i|^2 + \sigma_C^2}.
\end{equation}
\end{itemize}





\section{ISCPT Beamforming Design for Point Target}
In this section, the ISCPT beamforming designs for point target sensing are investigated with respect to both the models of separated and co-located ERs and IRs. Therefore, the objective of the beamforming optimization problem is to minimize the sensing CRB. The specific problem formulations and solution approaches are discussed in the following sub-sections.

\subsection{Sensing CRB Minimization for Separated ERs and IRs}
Given the maximum transmit power denoted by $P$, the required SINR level for the IRs denoted by $\{\eta_k\}$ and the required harvested energy for the ERs denoted by $\{Q_m\}$, the sensing CRB minimization problem can be formulated as
\begin{subequations}
\begin{align}
\textbf{(P1)}\quad \min_{\{\bold{w}_k\}_{k=1}^K} & \text{CRB}(\theta) \\ 
\text{s.t.} \quad
& \frac{|\bold{h}_k^H\bold{w}_k|^2}{\sum_{i=1,i \neq k}^{K} |\bold{h}_k^H\bold{w}_i|^2 + \sigma_C^2} \geq \eta_k, k=1,...,K,\label{Eq:P1b}\\ 
 & \beta_m \bold{c}_m \left(\sum_{k=1}^{K}\bold{w}_k\bold{w}_k^H\right)\bold{c}_m^H \geq Q_m, m=1,...,M, \label{Eq:P1c}\\
& \text{tr}\left(\sum_{k=1}^{K}\bold{w}_k\bold{w}_k^H\right) \leq P. \label{Eq:P1d}
\end{align}
\end{subequations}
The constraints \eqref{Eq:P1b} represent the required SINRs for IRs, while those in \eqref{Eq:P1c} correspond to the EH requirements for ERs. Additionally, \eqref{Eq:P1d} denotes the transmit power constraint.
It is observed that problem (P1) is non-convex due to the fractional structure in \eqref{Eq:P1b}. According to the Schur complement condition, problem (P1) can be equivalently transformed into  
\begin{subequations}
\begin{align}
\textbf{(Q1)} \quad \min_{t, \{\bold{w}_k\}_{k=1}^K} & -t \\ 
\text{s.t.} \quad
& \left[ 
\begin{array}{c}
\text{tr}(\dot{\bold{A}}^H\dot{\bold{A}}\sum_{k=1}^{K} \bold{w}_k \bold{w}_k^H) - t \qquad \text{tr}(\dot{\bold{A}}^H\bold{A}\sum_{k=1}^{K} \bold{w}_k \bold{w}_k^H)\\
\text{tr}(\dot{\bold{A}}^H\bold{A}\sum_{k=1}^{K}\bold{w}_k \bold{w}_k^H) \qquad \quad \text{tr}(\bold{A}^H\bold{A}\sum_{k=1}^{K} \bold{w}_k \bold{w}_k^H) \\
\end{array}
\right] \succeq \bold{0}, \label{Eq:P2b}\\ 
 & \frac{|\bold{h}_k^H\bold{w}_k|^2}{\sum_{i=1,i \neq k}^{K}|\bold{h}_k^H\bold{w}_i|^2 + \sigma_C^2} \geq \eta_k, k=1,...,K,\label{Eq:Q1c}\\ 
& \beta_m \bold{c}_m \left(\sum_{k=1}^{K}\bold{w}_k\bold{w}_k^H\right)\bold{c}_m^H \geq Q_m, m=1,...,M, \label{Eq:Q1d}\\
& \text{tr}\left(\sum_{k=1}^{K}\bold{w}_k\bold{w}_k^H\right) \leq P, \label{Eq:Q1e}
\end{align}
\end{subequations}
where $\bold{A} \triangleq \bold{A}(\theta)$ and $\dot{\bold{A}} \triangleq \dot{\bold{A}}(\theta)$. As problem (Q1) is still non-convex, the SDR technique is applied to relax the problem as a convex one. To this end, define $\bold{H}_k = \bold{h}_k \bold{h}_k^H$, $\bold{C}_m = \bold{c}_m^H \bold{c}_m$, and beamfomer covariance matrixs $\bold{W}_k = \bold{w}_k \bold{w}_k^H$, where the desired solution requires $\text{rank}(\bold{W}_k) = 1$ and $\bold{W}_k \succeq \bold{0}$. However, due to the existence of rank constraints, problem (Q1) remains unsolvable. After removing the rank constraints, the problem (Q1) can be relaxed as:
\begin{subequations}
\begin{align}
\textbf{(C1)} \quad \min_{t, \{\bold{W}_k\}_{k=1}^K} & -t \\ 
\text{s.t.} \quad
& \left[ 
\begin{array}{c}
 \text{tr}(\dot{\bold{A}}^H\dot{\bold{A}}\sum_{k=1}^{K} \bold{W}_k) - t \qquad  \text{tr}(\dot{\bold{A}}^H\bold{A} \sum_{k=1}^{K} \bold{W}_k)\\
 \text{tr}(\dot{\bold{A}}^H\bold{A}\sum_{k=1}^{K} \bold{W}_k) \qquad \quad \text{tr}(\bold{A}^H\bold{A}\sum_{k=1}^{K} \bold{W}_k) \\
\end{array}
\right] \succeq \bold{0}, \label{Eq:P3b}\\ 
& (1+\eta_k)\text{tr}(\bold{H}_k\bold{W}_k) - \eta_k \text{tr}\left(\bold{H}_k\sum_{i=1}^{K} \bold{W}_i\right) \geq \eta_k \sigma_C^2, k=1,...,K, \label{Eq:C1c}\\ 
& \beta_m  \text{tr} \left(\bold{C}_m \sum_{k=1}^{K} \bold{W}_k\right) \geq Q_m, m=1,...,M, \label{Eq:C1d}\\
& \text{tr}\left(\sum_{k=1}^{K} \bold{W}_k\right) \leq P, \label{Eq:P3e}\\
& \bold{W}_k \succeq \bold{0}, k = 1,...,K.\label{Eq:C1f}
\end{align}
\end{subequations}
The SDP may lead to high-rank solutions, thus making them infeasible for the original problem (C1). Therefore, we derive low-rank solution by using the rank-reduction technique.


First, let us define the dual variables for problem (C1), which are $\{z_1, z_2, ..., z_{K+1}\}$ and $\{v_1, v_2, ..., v_M\}$ that are associated with $K + M + 1$ linear constraints, and $\{\bold{Z}_1, \bold{Z}_2, ..., \bold{Z}_{K+1}\} \succeq \bold{0}$ that are associated with $K + 1$ semidefinite constraints. By assuming that the optimality is reached with $\{z_1, z_2, ..., z_{K+1}\}$, $\{v_1, v_2, ..., v_M\}$, $\{\bold{Z}_1, \bold{Z}_2, ... , \bold{Z}_{K+1}\}$, and $\{\bold{W}_1, \bold{W}_2, ..., \bold{W}_{K}\}$, the following complementary conditions hold true:
\begin{subequations}
\begin{align}
&\left((1+\eta_k)\text{tr}(\bold{H}_k\bold{W}_k) - \eta_k  \text{tr}(\bold{H}_k\sum_{i=1}^{K} \bold{W}_i) - \eta_k \sigma_C^2\right)z_k = 0, z_k \leq 0, k = 1,...,K,\\
&\left(\beta_m \text{tr} (\bold{C}_m \sum_{k=1}^{K} \bold{W}_k) - Q_m\right)v_m = 0, v_m \leq 0, m =1,...,M,\\
&\left(\text{tr}(\sum_{k=1}^{K} \bold{W}_k) - P\right) z_{K+1} = 0, z_{K+1} \geq 0,\\
&\bold{W}_k\bold{Z}_k = \bold{0}, \bold{Z}_k \succeq \bold{0}, k = 1,...,K,\\
&\left[ 
\begin{array}{c}
\text{tr}(\dot{\bold{A}}^H\dot{\bold{A}}\sum_{k=1}^{K} \bold{W}_k) - t \qquad \text{tr}(\dot{\bold{A}}^H\bold{A}\sum_{k=1}^{K} \bold{W}_k)\\
 \text{tr}(\dot{\bold{A}}^H\bold{A}\sum_{k=1}^{K} \bold{W}_k) \qquad \quad  \text{tr}(\bold{A}^H\bold{A}\sum_{k=1}^{K} \bold{W}_k) \\
\end{array}
\right] 
\bold{Z}_{K+1} = \bold{0}, \bold{Z}_{K+1} \succeq \bold{0}.
\end{align}       
\end{subequations}
Let us denote $R_k = \text{rank}(\bold{W}_k), k = 1,...,K$. By decomposing $\bold{W}_k = \bold{V}_k \bold{V}_k^H$ with $\bold{V}_k \in \mathbb{C}^{N_t \times R_k}$ (e.g., via low-rank Cholesky decomposition), one can get $\text{tr}(\bold{\Omega}\bold{W}_k) = \text{tr}(\bold{V}_k^H \bold{\Omega} \bold{V}_k)$ , where $\bold{\Omega}$ is an arbitrarily given matrix. Consider the following linear equations
\begin{subequations}
\begin{align}
&\sum_{k=1}^{K} \text{tr}(\bold{V}_k^H \dot{\bold{A}}^H\dot{\bold{A}} \bold{V}_k \bold{\Delta}_k) = 0,\label{le1}\\
&\sum_{k=1}^{K} \text{tr}(\bold{V}_k^H \bold{A}^H \bold{A} \bold{V}_k \bold{\Delta}_k) = 0,\label{le2}\\
&\sum_{k=1}^{K} \text{tr}(\bold{V}_k^H \dot{\bold{A}}^H \bold{A} \bold{V}_k \bold{\Delta}_k) = 0,\label{le3}\\
&\sum_{k=1}^{K} \text{tr}(\bold{V}_k^H \bold{V}_k \bold{\Delta}_k) = 0,\label{le4}\\
&\text{tr}(\bold{V}_k^H\bold{H}_k\bold{V}_k\bold{\Delta}_k) - \eta_k \sum_{i=1,i \neq k}^{K} \text{tr}(\bold{V}_i^H\bold{H}_k\bold{V}_i \bold{\Delta}_i) = 0, k = 1,...,K,\label{le5}\\
&\beta_m \sum_{k=1}^{K} \text{tr}(\bold{V}_k^H\bold{C}_m\bold{V}_k\bold{\Delta}_k) = 0, m = 1,...,M, \label{le6}
\end{align}
\end{subequations}
where $\bold{\Delta}_k$ is a $R_k \times R_k$ matrix. The system consists of $\sum_{k=1}^{K} R_k^2$ real-valued unknown variables and $K+M+4$ linear equations. The linear equations will have a solution if $\sum_{k=1}^{K} R_k^2 \geq K+M+4$. Let us denote the eigenvalues of $\bold{\Delta}_k$ as $\delta_{k1}, ..., \delta_{kR_k}$, we define
\begin{equation}
\delta_{\max} = \arg\max_{\delta_{kl}} \{|\delta_{kl}|, k = 1,...K, l = 1,...,R_k\}.
\end{equation}
We then update the solution by
\begin{equation}
\bold{W}'_k = \bold{V}_k (\bold{I}_{R_k} - \frac{1}{\delta_{\max}} \bold{\Delta}_k)\bold{V}_k^H, k = 1,...,K.
\end{equation}
The remaining task is to prove that $\{\bold{W}'_k\}_{k=1}^{K}$ is still a solution to the original equations with a reduced rank.


\begin{itemize}
\item {\bf Rank reduction}: It is obvious that
\begin{equation}\label{Eq:Reduction}
\sum_{k=1}^{K} \text{rank}(\bold{W}_k') \leq \sum_{k=1}^{K} \text{rank}(\bold{W}_k) - 1
\end{equation}
due to the choice of $\delta_{\max}$, i.e., the rank is reduced at least by one.

\item {\bf Primal Feasibility}: It is easy to verify that $\{\bold{W}_k'\}_{k=1}^{K}$ still satisfy the constraints, since the newly added terms in the constraints are all zero, thanks to the fact that $\{\bold{\Delta}_i\}_{i=1}^{K}$ is the solution of the linear system. We then define
\begin{equation}
\left[ 
\begin{array}{c}
\sum_{k=1}^{K} \text{tr}(\dot{\bold{A}}^H\dot{\bold{A}}\bold{W}_k) - t \qquad \sum_{k=1}^{K} \text{tr}(\dot{\bold{A}}^H\bold{A}\bold{W}_k)\\
\sum_{k=1}^{K} \text{tr}(\dot{\bold{A}}^H\bold{A}\bold{W}_k) \qquad \quad \sum_{k=1}^{K} \text{tr}(\bold{A}^H\bold{A}\bold{W}_k) \\
\end{array}
\right] \triangleq 
\left[ 
\begin{array}{c}
a-t \qquad c\\
c \qquad \quad b
\end{array}
\right] \succeq \bold{0},
\end{equation}
where $a,b \geq 0$. When the optimality is reached, we have $a - t - |c|^2b^{-1} = 0$, or equivalently
\begin{equation}
\left[ 
\begin{array}{c}
a-t \qquad c\\
c \qquad \quad b
\end{array}
\right] = \left[ 
\begin{array}{c}
|c|^2b^{-1} \qquad c\\
c \qquad \quad~~ b
\end{array}
\right].
\end{equation}
By substituting $\{\bold{W}_k'\}_{k=1}^{K}$ into (23), $a$, $b$, and $c$ keep unchanged. As a result, the objective value $-t$ remains the same.

\item {\bf Dual feasibility and complementarity}: Denote the updated dual variables as $\{v_1',v_2',...,v_M'\}$, $\{z_1',z_2',...,z_{K+1}'\}$, and $\{\bold{Z}_1',\bold{Z}_2',...,\bold{Z}_{K+1}'\}$. Note that by keeping $z_k'=z_k, k=1,...,K$, the complementary conditions for SINR constraints are satisfied. By keeping $v_m'=v_m, m=1,...,M$, the complementary conditions for harvested energy constraints are satisfied. Moreover, by letting $\bold{Z}_k' = \bold{Z}_k, k = 1,...,K$, we have
\begin{equation}
\text{tr}(\bold{W}_k' \bold{Z}_k') \!=\! \text{tr}(\bold{W}_k \bold{Z}_k) \!-\! \frac{1}{\delta_{\max}}\text{tr}(\bold{V}_k\bold{\Delta}_k\bold{V}_k^H \bold{Z}_k) \!=\! - \frac{1}{\delta_{\max}}\text{tr}(\bold{V}_k\bold{\Delta}_k\bold{V}_k^H \bold{Z}_k),k=1,...,K.
\end{equation}
Given the positive semi-definiteness of $\bold{W}_k$ and $\bold{Z}_k$, we have
\begin{equation}
\bold{W}_k \bold{Z}_k = 0 \Leftrightarrow \text{tr}(\bold{W}_k \bold{Z}_k) = \text{tr}(\bold{V}_k^H \bold{Z}_k\bold{V}_k) = 0 \Leftrightarrow \bold{V}_k^H \bold{Z}_k\bold{V}_k = 0.
\end{equation}
Hence,
\begin{equation}
\text{tr}(\bold{W}_k' \bold{Z}_k') = 0 \Leftrightarrow \bold{W}_k' \bold{Z}_k' = 0, k=1,...,K,
\end{equation}
which suggests that the complementary conditions still hold for $\bold{Z}_k' = \bold{Z}_k$ and $\bold{W}_k'$, $k =1,2,...,K$. Finally, let us deal with $\bold{Z}_{K+1}'$. Upon letting 
\begin{equation}
\bold{Z}_{K+1} = \left[ 
\begin{array}{c}
|c|^{-2}b \qquad -c^{-1}\\
-c^{-1} \qquad \quad~~ b^{-1}
\end{array}
\right] \succeq \bold{0}, 
\end{equation}
we have
\begin{equation}
\left[ 
\begin{array}{c}
a-t \qquad c\\
c \qquad \quad b
\end{array}
\right] \bold{Z}_{K+1} = \bold{0}.
\end{equation}
Since substituting $\{\bold{W}_k'\}_{k=1}^{K}$ into (23) does not change the values of $a$, $b$ and $c$, one may simply let $\bold{Z}_{K+1}' = \bold{Z}_{K+1}$, in which case the complementary condition with respect to $\bold{Z}_{K+1}'$ and the semidefinite constraint (23) still hold, where the optimal $t$ keeps unchanged.
\end{itemize}

Based on the above, it is clear that $\{\bold{W}_k'\}_{k=1}^{K}$ is an optimal solution to problem (C1) with reduced rank. We then check if $\sum_{k=1}^{K} R_k^2 \geq K+M+4$. If so, repeat the above rank-reduction procedure until $\sum_{k=1}^{K} R_k^2 \leq K+M+4$. Moreover, it is obvious that $\text{rank}(\bold{W}_k) \geq 1$ for $k = 1, 2, ..., K$. Therefore, there always exists a solution satisfying
\begin{equation}
K \leq \sum_{k=1}^{K} \text{rank}^2(\bold{W}_k) \leq K+M+4.
\end{equation}

After deriving the low-rank solution, the solution of the original problem can then be extracted by applying the Cholesky decomposition or eigenvalue decomposition.

\subsection{Sensing CRB Minimization for Co-located ERs and IRs}
As for the co-located ERs and ERs, the sensing CRB minimization problem can be formulated as
\begin{subequations}
\begin{align}
\textbf{(P2)} \quad \min_{\{\bold{w}_k\}_{k=1}^K,\{\rho_k\}_{k=1}^K} & \text{CRB}(\theta) \\ 
\text{s.t.} \quad~~~
& \frac{(1 - \rho_k)|\bold{h}_k^H\bold{w}_k|^2}{(1 - \rho_k)\sum_{i=1,i \neq k}^{K} |\bold{h}_k^H\bold{w}_i|^2 + \sigma_C^2} \geq \eta_k, k=1,...,K,\label{Eq:P2b}\\ 
 & \rho_k \beta_k \bold{h}_k^H \left(\sum_{k=1}^{K}\bold{w}_k\bold{w}_k^H\right)\bold{h}_k T \geq Q_k, k=1,...,K, \label{Eq:P2c}\\
& \text{tr}\left(\sum_{k=1}^{K}\bold{w}_k\bold{w}_k^H\right) \leq P. \label{Eq:P2d}
\end{align}
\end{subequations}

Note that in \textbf{(P2)}, the $K$ receivers serve dual roles as both information and energy receivers. The PS factors $\{\rho_1,\rho_2,...,\rho_K\}$ are introduced as optimization variables to determine the appropriate ratio of the received signal for information decoding and EH.  After applying the SDR, one can get:
\begin{subequations}
\begin{align}
\textbf{(C2)} ~ \min_{t, \{\bold{W}_k\}_{k=1}^K,\{\rho_k\}_{k=1}^K} & -t \\ 
 \text{s.t.} \qquad~
& \left[ 
\begin{array}{c}
 \text{tr}(\dot{\bold{A}}^H\dot{\bold{A}} \sum_{k=1}^K\bold{W}_{k} ) - t \qquad \text{tr}(\dot{\bold{A}}^H\bold{A} \sum_{k=1}^K\bold{W}_{k})\\
 \text{tr}(\dot{\bold{A}}^H\bold{A}\sum_{k=1}^K\bold{W}_{k}) \qquad \quad  \text{tr}(\bold{A}^H\bold{A}\sum_{k=1}^K\bold{W}_{k}) \\
\end{array}
\right] \succeq \bold{0}, \label{Eq:C2b}\\ 
& (1-\rho_k)\left((1+\eta_k)\text{tr}(\bold{H}_k\bold{W}_k) - \eta_k \text{tr}\left(\bold{H}_k\sum_{i=1}^K\bold{W}_{i}\right)\right) \geq \eta_k \sigma_c^2, k=1,...,K, \label{Eq:C2c}\\ 
& \rho_k \beta_k  \text{tr} \left(\bold{H}_k \sum_{i=1}^K\bold{W}_{i}\right) \geq Q_k, k =1,...,K, \label{Eq:C2d}\\
& \text{tr}\left(\sum_{k=1}^K\bold{W}_{k}\right) \leq P, \label{Eq:C2e}\\
& \bold{W}_k \succeq \bold{0}, k = 1,...,K. \label{Eq:C2f}
\end{align}
\end{subequations}
For $k = 1, 2, ..., K$, let us define the auxiliary variables $\frac{\eta_k \sigma_c ^2}{1-\rho_k} \leq c_k$ and $\frac{Q_k}{\rho_k\beta_m} \leq d_k$. Then, problem (C2) is expressed as:
\begin{subequations}
\begin{align}
\textbf{(Q2)} \quad \min_{t, \{\bold{W}_k\}_{k=1}^K, \{\rho_k\}_{k=1}^K}  & -t \\ 
\text{s.t.} \quad~~~
& \left[ 
\begin{array}{c}
\text{tr}(\dot{\bold{A}}^H\dot{\bold{A}} \sum_{k=1}^K\bold{W}_{k}) - t \qquad  \text{tr}(\dot{\bold{A}}^H\bold{A} \sum_{k=1}^K\bold{W}_{k})\\
 \text{tr}(\dot{\bold{A}}^H\bold{A}\sum_{k=1}^K\bold{W}_{k}) \qquad \quad \text{tr}(\bold{A}^H\bold{A}\sum_{k=1}^K\bold{W}_{k}) \\
\end{array}
\right] \succeq \bold{0}, \label{Eq:Q2b}\\ 
& (1+\eta_k)\text{tr}(\bold{H}_k\bold{W}_k) - \eta_k \text{tr}\left(\bold{H}_k\sum_{i=1}^K\bold{W}_{i}\right) \geq c_k, k=1,...,K, \label{Eq:Q2c}\\ 
& \text{tr} \left(\bold{H}_k \sum_{i=1}^K\bold{W}_{i}\right) \geq d_k, k =1,...,K, \label{Eq:C3d}\\
& \text{tr}\left(\sum_{k=1}^K\bold{W}_{k}\right) \leq P, \label{Eq:Q2e}\\
& \frac{\eta_k \sigma_c ^2}{1-\rho_k} \leq c_k, k=1,...,K, \label{Eq:Q2f}\\
& \frac{Q_k}{\rho_k\beta_m} \leq d_k, k=1,...,K, \label{Eq:Q2g}\\
& \bold{W}_k \succeq \bold{0}, k=1,...,K.\label{Eq:Q2h}
\end{align}
\end{subequations}
Adopting the Schur complement lemma, the constraints \eqref{Eq:Q2f} and \eqref{Eq:Q2g} are reformulated as
\begin{equation}
   \left[ \begin{array}{c}
c_k \qquad \sqrt{\eta_k}\sigma_c\\
\sqrt{\eta_k}\sigma_c \qquad 1-\rho_k\\ 
\end{array}\right] \succeq \bold{0},
\end{equation}
\begin{equation}
\left[ \begin{array}{c}
d_k \qquad \sqrt{\frac{Q_k}{\beta_k}}\\\
\sqrt{\frac{Q_k}{\beta_k}} \qquad \rho_k\\ 
\end{array}\right] \succeq \bold{0}.
\end{equation}

Then problem (\textbf{Q2}) is rewritten as 
\begin{subequations}
\begin{align}
\textbf{(Z2)} ~ \min_{t,\{\bold{W}_k\}_{k=1}^K, \{\rho_k\}_{k=1}^K, \{c_k\}_{k=1}^K, \{d_k\}_{k=1}^K}  &-t \\ 
\text{s.t.} \qquad \qquad~~~
& \left[ 
\begin{array}{c}
 \text{tr}(\dot{\bold{A}}^H\dot{\bold{A}} \sum_{k=1}^K\bold{W}_{k}) - t \qquad  \text{tr}(\dot{\bold{A}}^H\bold{A} \sum_{k=1}^K\bold{W}_{k})\\
\text{tr}(\dot{\bold{A}}^H\bold{A}\sum_{k=1}^K\bold{W}_{k}) \qquad \quad  \text{tr}(\bold{A}^H\bold{A}\sum_{k=1}^K\bold{W}_{k}) \\
\end{array}
\right] \succeq \bold{0}, \label{Eq:C2b}\\ 
& (1+\eta_k)\text{tr}(\bold{H}_k\bold{W}_k) - \eta_k \text{tr}\left(\bold{H}_k\sum_{i=1}^K\bold{W}_{i}\right) \geq c_k, k=1,...,K, \label{Eq:C2c}\\ 
& \text{tr} \left(\bold{H}_k \sum_{i=1}^K\bold{W}_{i}\right) \geq d_k, k =1,...,K, \label{Eq:C3d}\\
& \text{tr}\left(\sum_{k=1}^K\bold{W}_{k}\right) \leq P, \label{Eq:C3e}\\
&    \left[ \begin{array}{c}
c_k \qquad \sqrt{\eta_k}\sigma_c\\
\sqrt{\eta_k}\sigma_c \qquad 1-\rho_k\\ 
\end{array}\right] \succeq \bold{0}, k =1,...,K, \\
& \left[ \begin{array}{c}
d_k \qquad \sqrt{\frac{Q_k}{\beta_k}}\\\
\sqrt{\frac{Q_k}{\beta_k}} \qquad \rho_k\\ 
\end{array}\right] \succeq \bold{0}, k =1,...,K, \\
& \bold{W}_k \succeq \bold{0}, k =1,...,K, \label{Eq:C3f}
\end{align}
\end{subequations}
Problem (Z2) is a standard SDP and can be solved via numerical tools like CVX. The similar techniques for Section IV. A can be applied to approaching the optimal solution of the original problem.

\section{ISCPT Beamforming Design for Extended Target}
In this section, the beamforming designs are further investigated for the extended target, where the target is regarded as a surface with a large number of distributed point-like scatterers. The corresponding TRM can be specified as
\begin{equation}
\bold{G} = \sum_{n=1}^{N_s} \alpha_n \bold{A}(\theta_n), 
\end{equation} 
where $N_s$ represents the number of scatterers, $\alpha_n$ and $\theta_n$ represent the reflection coefficient and the azimuth angle of the $n$-th scatterer. Suppose that the BS has no prior knowledge about the number $N_s$ and the corresponding angles of scatterers, as the echoes can be randomly reflected. Therefore, we resort to estimating the complete TRM $\bold{G}$ instead of the angles $\theta_n$. According to \cite{li2023integrated}, the \emph{maximum likelihood estimation} (MLE) of $\bold{G}$ is given as
\begin{equation} \label{MLEG}
\hat{\bold{G}} = \bold{Y}_R \bold{X}^H(\bold{X} \bold{X}^H)^{-1}.
\end{equation} 
The corresponding MSE for estimating $\bold{G}$ is computed as
\begin{equation} \label{MSEG}
\text{MSE}(\bold{G}) = \mathbb{E}\left\{\|\bold{G} - \bold{\hat{G}}\|^2\right\} = \frac{N_r \sigma_R^2}{T}\text{tr}\left\{\bold{R}_X^{-1}\right\},
\end{equation} 
It should be noted that as the MLE of $\bold{G}$ is simply a linear estimation problem in the presence of the \emph{independent and identically distributed} (i.i.d.) Gaussian noise, the MSE in \eqref{MSEG} is equal to the CRB.

According to \eqref{eq:Tx}, $\bold{X} \in \mathbb{C}^{N_t \times T}$ is rank-deficient, since
\begin{equation} 
\text{rank}(\bold{X}) \leq \min \{\text{rank}(\bold{W}_D), \text{rank}(\bold{S})\} = K < N_t \leq T.
\end{equation} 
Consequently, if we transmit only $K$ signal streams, the available \emph{degrees of freedom} (DoF) are not enough to recover the rank-$N_t$ matrix $\bold{G}$. To deal with this problem, we introduce an auxiliary structure to $\bold{X}$ to maximize its DoFs up to $N_t$. This is achieved by transmitting dedicated probing streams alongside the data streams intended for $K$ users. Specifically, we define the beamforming matrix $\bold{W}_D = [\bold{w}_1, \bold{w}_2, ..., \bold{w}_K, \bold{w}_{K+1}, ..., \bold{w}_{K+N_t}] \in \mathbb{C}^{N_t \times (K+N_t)}$, where the first $K$ columns represent the beamformers for the IR signals, while the remaining columns are auxiliary beamformers for radar sensing. Furthermore, we define $\bold{S} = [\bold{S}_C^T, \bold{S}_A^T]^T$ with $\mathbb{E}[\bold{S}\bold{S}^H] = \bold{I}_{K+N_t}$, where $\bold{S}_C \in \mathbb{C}^{K \times T}$ consists of $K$ unit-power data streams intended for the $K$ IRs, and $\bold{S}_A \in \mathbb{C}^{N_t \times T}$ consists of $N_t$ unit-power auxiliary streams. According to the large-number law, $\bold{R}_X = \frac{1}{T} \bold{X} \bold{X}^H = \frac{1}{T} \bold{W}_D \bold{S} \bold{S}^H \bold{W}_D^H = \sum_{k=1}^{K+N_t} \bold{w}_k \bold{w}_k^H$.
The corresponding transmit power can be written as
\begin{equation}
\mathbb{E}[\text{tr}(\bold{X}\bold{X}^H)] = \text{tr}\left(\sum_{k=1}^{K+N_t} \bold{w}_k \bold{w}_k^H\right).
\end{equation} 

For the separated case, the SINR of signal received by the $k$-th IR is expressed as
\begin{equation}
\gamma_k = \frac{|\bold{h}_k^H\bold{w}_k|^2}{\sum_{i=1,i \neq k}^{K+N_t} |\bold{h}_k^H\bold{w}_i|^2 + \sigma_C^2}.
\end{equation}
The energy harvested by the $m$-th ER is expressed as
\begin{equation}
E_m = \beta_m \mathbb{E}[|\bold{c}_m \bold{X}|^2] T = \beta_m \bold{c}_m \left(\sum_{k=1}^{K+N_t}\bold{w}_k\bold{w}_k^H\right)\bold{c}_m^H T.
\end{equation}

For the co-located case, the SINR of signal received by the $k$-th receiver is expressed as
\begin{equation}
\gamma_k = \frac{(1 - \rho_k)|\bold{h}_k^H\bold{w}_k|^2}{(1 - \rho_k)\sum_{i=1,i \neq k}^{K+N_t} |\bold{h}_k^H\bold{w}_i|^2 + \sigma_C^2}.
\end{equation}
The energy harvested by the $k$-th receiver is expressed as
\begin{equation}
E_k = \rho_k \beta_k \mathbb{E}[|\bold{h}_k \bold{X}|^2] T  = \rho_k \beta_k \bold{h}_k^H \left(\sum_{k=1}^{K+N_t}\bold{w}_k\bold{w}_k^H\right)\bold{h}_k T.
\end{equation}

In the following sub-sections, the beamformers are optimized to minimize the sensing MSE of the extended target while guaranteeing the SINR and EH requirements with respect to both the cases with separated and co-located ERs and IRs.

\subsection{Sensing MSE Minimization for Separated ERs and IRs}
For the scenario with separated ERs and IRs, the problem of sensing MSE minimization can be formulated as follows:\begin{subequations}
\begin{align}
\textbf{(P3)} \quad \min_{\{\bold{w}_k\}_{k=1}^{K+N_t}} & \frac{N_r \sigma_R^2}{T}\text{tr}\left\{\left(\sum_{k=1}^{K+N_t} \bold{w}_k \bold{w}_k^H\right)^{-1}\right\} \\ 
\text{s.t.} \quad~~
& \frac{|\bold{h}_k^H\bold{w}_k|^2}{\sum_{i=1,i \neq k}^{K+N_t} |\bold{h}_k^H\bold{w}_i|^2 + \sigma_C^2} \geq \eta_k, k = 1,...,K,\label{Eq:P3b}\\ 
& \beta_m \bold{c}_m \left(\sum_{k=1}^{K+N_t}\bold{w}_k\bold{w}_k^H\right)\bold{c}_m^H \geq Q_m, m = 1,...,M, \label{Eq:P3c}\\
& \text{tr}\left(\sum_{k=1}^{K+N_t}\bold{w}_k\bold{w}_k^H\right) \leq P. \label{Eq:P3d}
\end{align}
\end{subequations}
Due to the existence of $\{\bold{w}_{K+1},...,\bold{w}_{K+N_t}\}$, problem (P3) has more variables to be optimized than problem (P1). Fortunately, one can observe that the effect of the variables $\{\bold{w}_K,...,\bold{w}_{K+N_t}\}$ on problem (P3) is only reflected by $\sum_{k=K+1}^{K+N_t} \bold{w}_k\bold{w}_k^H$. Therefore, these variables can be replaced by $\bold{R}_X = \sum_{k=1}^{K+N_t} \bold{w}_k\bold{w}_k^H$. Let $\bold{H}_k = \bold{h}_k \bold{h}_k^H$, $\bold{C}_m = \bold{c}_m^H \bold{c}_m$, and $\bold{W}_k = \bold{w}_k \bold{w}_k^H$, where $\text{rank}(\bold{W}_k) = 1$ and $\bold{W}_k \succeq \bold{0}$. By relaxing the rank-1 constraints, problem (P3) can be simplified as
\begin{subequations}
\begin{align}
\textbf{(Q3)} \quad \min_{\bold{R}_X, \{\bold{W}_k\}_{k=1}^K} & \frac{N_r \sigma_R^2}{T}\text{tr}\left\{\bold{R}_X^{-1}\right\} \\ 
\text{s.t.} \quad~~
& (1+\eta_k)\text{tr}(\bold{H}_k\bold{W}_k) - \eta_k \text{tr}(\bold{H}_k\bold{R}_X) \geq \eta_k \sigma_C^2, k = 1,...,K, \label{Eq:Q3c}\\ 
& \beta_m  \text{tr} \left(\bold{C}_m \bold{R}_X\right) \geq Q_m, m = 1,...,M, \label{Eq:Q3d}\\
& \text{tr}\left(\bold{R}_X\right) \leq P, \label{Eq:Q3e}\\
& \bold{W}_k \succeq \bold{0}, k = 1,...,K, \label{Eq:Q3f}\\
&\bold{R}_X - \sum_{k=1}^{K} \bold{W}_k  \succeq \bold{0}. \label{Eq:Q3g}
\end{align}
\end{subequations}
Note that the objective function is convex with respect to $\bold{R}_X$, and both $\eqref{Eq:Q3c}$ and $\eqref{Eq:Q3d}$ are convex. Therefore, \textbf{(Q3)} is a convex problem that can be efficiently solved using dedicated tools such as Matlab CVX. By denoting the optimal solution of \textbf{(Q3)} as $\bar{\bold{R}}_X$ and ${\bold{\bar{W}}_1,...,\bold{\bar{W}}_{K}}$, one can observe that rank$(\bold{R}_X) = N_t$ and rank$(\bold{\bar{W}}_k)\geq 1$. To obtain the rank-one solution of $\bold{W}_k$ while maintaining optimality, a solving approach is proposed based on the following theorem.

\begin{theorem}[]Given optimal solution $\bar{\bold{R}}_X$ and ${\bold{\bar{W}}_1,...,\bold{\bar{W}}_{K}}$ to problem (Q3), the constructed $\widetilde{\bold{R}}_X$ and ${\bold{\widetilde{W}}_1,...,\bold{\widetilde{W}}_{K}}$ in the following are also optimal.
\begin{equation}
    \widetilde{\bold{R}}_X = \bar{\bold{R}}_X, \qquad {\bold{\widetilde{W}}}_k = \frac{\bold{\bar{W}}_k \bold{H}_k \bold{\bar{W}}_k^H}{\rm{tr}(\bold{H}_k\bold{\bar{W}}_k )},
\end{equation}
where $\rm{rank}(\bold{\widetilde{W}}_k)=1, k=1,2,...,K$.
\end{theorem}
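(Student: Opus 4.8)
The plan is to show that the constructed tuple $(\widetilde{\bold{R}}_X,\widetilde{\bold{W}}_1,\dots,\widetilde{\bold{W}}_K)$ is feasible for (Q3) and attains the same objective value as the given optimal point, which together with the rank-one property would prove the claim. Since $\widetilde{\bold{R}}_X=\bar{\bold{R}}_X$, the objective $\frac{N_r\sigma_R^2}{T}\text{tr}\{\bold{R}_X^{-1}\}$ is unchanged, and the constraints \eqref{Eq:Q3d} and \eqref{Eq:Q3e}, which involve only $\bold{R}_X$, are automatically preserved. Hence the work reduces to checking \eqref{Eq:Q3c}, \eqref{Eq:Q3f}, \eqref{Eq:Q3g} and the rank statement.

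The easy verifications come first. Because $\bold{\bar{W}}_k\succeq\bold{0}$ is Hermitian we have $\bold{\bar{W}}_k^H=\bold{\bar{W}}_k$, and since $\bold{H}_k=\bold{h}_k\bold{h}_k^H$ we may rewrite $\widetilde{\bold{W}}_k=(\bold{\bar{W}}_k\bold{h}_k)(\bold{\bar{W}}_k\bold{h}_k)^H/(\bold{h}_k^H\bold{\bar{W}}_k\bold{h}_k)$; the denominator $\text{tr}(\bold{H}_k\bold{\bar{W}}_k)=\bold{h}_k^H\bold{\bar{W}}_k\bold{h}_k$ is strictly positive, because \eqref{Eq:Q3c} forces $(1+\eta_k)\text{tr}(\bold{H}_k\bold{\bar{W}}_k)\geq\eta_k\sigma_C^2+\eta_k\text{tr}(\bold{H}_k\bar{\bold{R}}_X)>0$. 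Thus $\widetilde{\bold{W}}_k$ is a well-defined positive-semidefinite rank-one matrix, settling \eqref{Eq:Q3f} and the rank claim. Next, using the identity $\bold{H}_k\bold{\bar{W}}_k\bold{H}_k\bold{\bar{W}}_k=\bold{h}_k(\bold{h}_k^H\bold{\bar{W}}_k\bold{h}_k)\bold{h}_k^H\bold{\bar{W}}_k$, a one-line trace computation gives $\text{tr}(\bold{H}_k\widetilde{\bold{W}}_k)=(\text{tr}(\bold{H}_k\bold{\bar{W}}_k))^2/\text{tr}(\bold{H}_k\bold{\bar{W}}_k)=\text{tr}(\bold{H}_k\bold{\bar{W}}_k)$, so the left-hand side of \eqref{Eq:Q3c} is literally unchanged and that constraint continues to hold with the same slack.

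The only substantive point is \eqref{Eq:Q3g}, i.e.\ $\bar{\bold{R}}_X-\sum_{k=1}^{K}\widetilde{\bold{W}}_k\succeq\bold{0}$, and I expect this to be the main obstacle. The idea is to establish the stronger per-user bound $\widetilde{\bold{W}}_k\preceq\bold{\bar{W}}_k$ and then sum over $k$ and invoke the assumed feasibility $\bar{\bold{R}}_X-\sum_k\bold{\bar{W}}_k\succeq\bold{0}$ of the optimal point. To prove $\widetilde{\bold{W}}_k\preceq\bold{\bar{W}}_k$, factor $\bold{\bar{W}}_k=\bold{U}_k\bold{U}_k^H$ (a matrix square root or Cholesky factor) and set $\bold{g}_k=\bold{U}_k^H\bold{h}_k$; then $\bold{\bar{W}}_k-\widetilde{\bold{W}}_k=\bold{U}_k\bigl(\bold{I}-\bold{g}_k\bold{g}_k^H/\|\bold{g}_k\|^2\bigr)\bold{U}_k^H$, where the bracketed matrix is the orthogonal projector onto the complement of $\bold{g}_k$ and is therefore positive semidefinite, and conjugation by $\bold{U}_k$ preserves positive semidefiniteness. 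Assembling these facts shows $(\widetilde{\bold{R}}_X,\widetilde{\bold{W}}_1,\dots,\widetilde{\bold{W}}_K)$ is feasible for (Q3) with the same objective, hence optimal, which completes the proof.
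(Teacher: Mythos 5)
Your proof is correct and follows essentially the same route as the paper: both reduce the claim to feasibility plus an unchanged objective (since it depends only on $\bold{R}_X$), both use the trace identity $\mathrm{tr}(\bold{H}_k\widetilde{\bold{W}}_k)=\mathrm{tr}(\bold{H}_k\bar{\bold{W}}_k)$ for the SINR constraint, and both hinge on the key inequality $\widetilde{\bold{W}}_k \preceq \bar{\bold{W}}_k$ summed over $k$ to handle the constraint $\bold{R}_X - \sum_k \bold{W}_k \succeq \bold{0}$. The only cosmetic difference is that you prove that inequality via the factorization $\bar{\bold{W}}_k=\bold{U}_k\bold{U}_k^H$ and an orthogonal projector, whereas the paper applies the Cauchy--Schwarz inequality to the quadratic form $\bold{v}^H(\bar{\bold{W}}_k-\widetilde{\bold{W}}_k)\bold{v}$ directly---these are the same fact---and you additionally (and correctly) verify that $\mathrm{tr}(\bold{H}_k\bar{\bold{W}}_k)>0$ so the construction is well defined, a point the paper leaves implicit.
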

\begin{proof}
It can be observed that rank$(\bold{\widetilde{W}}_k)= 1$ since rank$(\bold{H}_k)= 1$. The next step is to demonstrate that the solution is both feasible and optimal for \textbf{(Q3)}. It is worth noting that the optimality is due to $\widetilde{\bold{R}}_X = \bar{\bold{R}}_X$.

\item {\bf Feasibility}:
Note that for the SINR constraint, one has
\begin{equation} \label{SINR-trans}
    \text{tr}(\bold{H}_k\bold{\widetilde{W}}_i) = \bold{h}_k^H \bold{\bar{W}}_i \bold{h}_k \bold{h}_k^H \bold{\bar{W}}_i^H \bold{h}_k(\bold{h}_k^H \bold{\bar{W}}_i^H \bold{h}_k)^{-1} = \bold{h}_k^H \bold{\bar{W}}_i^H \bold{h}_k = \text{tr}(\bold{H}_k\bold{\bar{W}}_i), i =1,...,K+N_t.
\end{equation}
By substituting \eqref{SINR-trans} into \eqref{Eq:Q3c}, 
\begin{equation}
(1+\eta_k)\text{tr}(\bold{H}_k\bold{\widetilde{W}}_k) - \eta_k \sum_{i=1}^{N_t}\text{tr}(\bold{H}_k\bold{\widetilde{W}}_i) = (1+\eta_k)\text{tr}(\bold{H}_k\bold{\bar{W}}_k) - \eta_k \sum_{i=1}^{N_t}\text{tr}(\bold{H}_k\bold{\bar{W}}_i) \geq \eta_k \sigma_C^2. 
\end{equation}
The feasibility of \eqref{Eq:Q3g} holds since $ \bold{\bar{W}}_k-\bold{\widetilde{W}}_k \succeq \bold{0}$. 
For any $\bold{v} \in \mathbb{C}^{N_t \times 1 }$, 
\begin{equation}\label{43}
    \bold{v}^H(\bold{\bar{W}}_k-\bold{\widetilde{W}}_k)\bold{v} =\bold{v}^H\bold{\bar{W}}_k\bold{v}-(\bold{h}_k^H \bold{\bar{W}}_k\bold{h}_k)^{-1} (\bold{v}^H \bold{\bar{W}}_k\bold{h}_k)^2.
\end{equation}
By applying Cauchy-Schwarz inequality, 
\begin{equation}
    (\bold{h}_k^H \bold{\bar{W}}_k\bold{h}_k)(\bold{v}^H \bold{\bar{W}}_k\bold{v}) \geq (\bold{v}^H\bold{\bar{W}}_k\bold{h}_k)^2.
\end{equation}
Hence, \eqref{43} must be non-negative, and $\bold{\bar{W}}_k-\bold{\widetilde{W}}_k$ is semi-positive definite. Since the remaining constraints can also be satisfied, the solution is both feasible and optimal for \textbf{(Q3)}, thereby completing the proof.
\end{proof}


\subsection{Sensing MSE Minimization for Co-located ERs and IRs}
For the scenario with co-located ERs and IRs, the problem of sensing MSE minimization can be formulated as follows:
\begin{subequations}
\begin{align}
\textbf{(P4)} \quad \min_{\{\bold{w}_k\}_{k=1}^{K+N_t},\{\rho_k\}_{k=1}^K} & \frac{N_r \sigma_R^2}{T}\text{tr}\left\{\left(\sum_{k=1}^{K+N_t} \bold{w}_k \bold{w}_k^H\right)^{-1}\right\} \\ 
\text{s.t.} \qquad~
& \frac{(1 - \rho_k)|\bold{h}_k^H\bold{w}_k|^2}{(1 - \rho_k)\sum_{i=1,i \neq k}^{K+N_t} |\bold{h}_k^H\bold{w}_i|^2 + \sigma_C^2} \geq \eta_k, k = 1,...,K,\label{Eq:P4b}\\ 
& \rho_k \beta_k \bold{h}_k \left(\sum_{k=1}^{K+N_t}\bold{w}_k\bold{w}_k^H\right)\bold{h}_k^H T \geq Q_k, k = 1,...,K, \label{Eq:P4c}\\
& \text{tr}\left(\sum_{k=1}^{K+N_t}\bold{w}_k\bold{w}_k^H\right) \leq P. \label{Eq:P4d}
\end{align}
\end{subequations}
Following the similar relaxation approach of problem (P3), the problem (P4) can be converted to:
\begin{subequations}
\begin{align}
\textbf{(C4)} \quad \min_{\bold{R}_X,\{\bold{W}_k\}_{k=1}^K,\{\rho_k\}_{k=1}^K} & \frac{N_r \sigma_R^2}{T}\text{tr}\left\{\bold{R}_X^{-1}\right\} \\ 
\text{s.t.} \qquad~~~
& (1-\rho_k)((1+\eta_k)\text{tr}(\bold{H}_k\bold{W}_k) - \eta_k \text{tr}(\bold{H}_k\bold{R}_X)) \geq \eta_k \sigma_C^2, k = 1,...,K, \label{Eq:C4c}\\ 
& \rho_k\beta_m  \text{tr} \left(\bold{C}_m \bold{R}_X\right) \geq Q_k, k = 1,...,K, \label{Eq:C4d}\\
& \text{tr}\left(\bold{R}_X\right) \leq P, \label{Eq:C4e}\\
& \bold{W}_k \succeq \bold{0}, k = 1,...,K,\label{Eq:C4f}\\
&\bold{R}_X - \sum_{k=1}^{K} \bold{W}_k  \succeq \bold{0}. \label{Eq:C4g}
\end{align}
\end{subequations}
Combining the analysis for the co-located case in Section IV.B with the MSE minimization method discussed in Section V.B, the problem \textbf{(C4)} can be solved using similar techniques.

\section{Case Study on Target Positioning via ISCPT}
In this section, the ISCPT scheme is applied for the use case of point target positioning. As shown in Fig.~\ref{FigPosition}, the BS is responsible for positioning the point target based on the extracted information such as angle and distance from the reflected radar signals. For the target with TRM $\bold{G} = \alpha \bold{A}(\theta)$, let $a_{pq}(\theta)$ represents the element located in the $p$-th row and $q$-th column of $\bold{A}(\theta)$, one can get
\begin{equation} \label{Eq:PDM}
a_{pq}(\theta_m) = \exp\{-j\omega[\tau_p(\theta_m)+\tau_q(\theta_m)]\},
\end{equation}
where $\omega$ is the angular velocity. $\tau_p(\theta)$ is the time delay for transmitting between the first antenna and the $p$-th antenna, $\tau_q(\theta)$ is the time delay for receiving between the first antenna and the $q$-th antenna. According to \cite{bekkerman2006target}, the phase delay between the $p$-th and $q$-th antennas can be mathematically expressed as follows:
\begin{equation} \label{Eq:PD}
a_{pq}(\theta) = \exp\{-\frac{2\pi j}{\lambda}(y_p+y_q)\cos{\theta}\},
\end{equation}
where $y_p$ and $y_q$ represent the location of the $p$-th transmitting antenna and $q$-th receiving antenna at the BS, respectively. Following the derivation of $\bold{G}$, the MLE of $\alpha$ and $\theta$ can be obtained by minimizing the negative log-likelihood function:

\begin{figure}[t]
	\centering
	\includegraphics[scale=0.45]{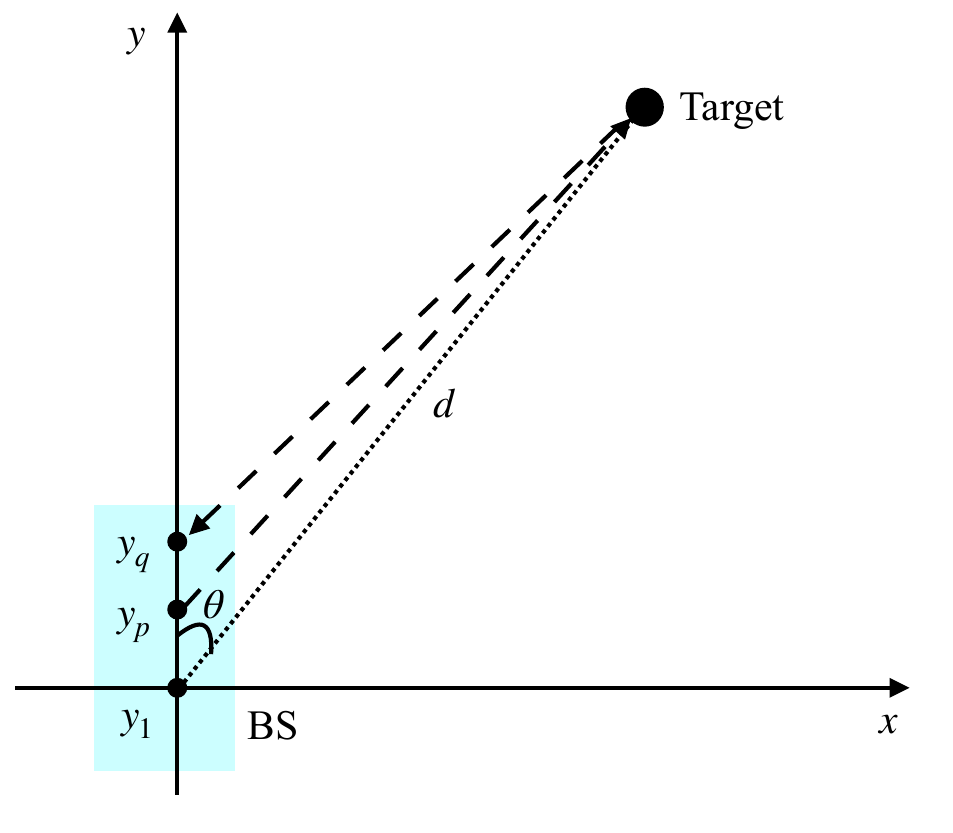}
	\caption{Target Positioning based on ISCPT}
	\label{FigPosition}
\end{figure}

\begin{equation} \label{Eq:MLELoc}
L(\alpha,\theta) \!=\! \text{tr}[(\bold{Y}_R \!-\! \alpha \bold{A}(\theta) \bold{X})^H(\bold{Y}_R \!-\! \alpha \bold{A}(\theta) \bold{X})].
\end{equation}
The derivatives of $L(\alpha,\theta)$ w.r.t. $\alpha$ is
\begin{equation} \label{Eq:Deralpha}
2 \alpha \text{tr}(\bold{X}^H \bold{A}^H(\theta) \bold{A}(\theta) \bold{X}) - 2\text{tr} (\bold{X}^H \bold{A}^H(\theta) \bold{Y}_R).
\end{equation}
According to \eqref{MLEG}, $\hat{\bold{G}} \bold{X} \bold{X}^H = \bold{Y}_R \bold{X}^H$. Setting the derivatives as zero, one can get
\begin{equation} \label{Eq:Estalpha}
\hat{\alpha} = \frac{\text{tr} (\bold{R}_X \bold{A}^H(\theta) \hat{\bold{G}} )}{\text{tr}(\bold{R}_X \bold{A}^H(\theta) \bold{A}(\theta) )}.
\end{equation}
By replacing $\alpha$ with $\hat{\alpha}$ in $L(\alpha,\theta)$, one can get
\begin{equation} \label{Eq:EstA}
L(\theta)= \text{tr}(\bold{Y}_R^H\bold{Y}_R)-\frac{\text{tr}^2(\bold{R}_X\bold{A}^H(\theta)\hat{\bold{G}} )}{\text{tr}(\bold{R}_X\bold{A}^H(\theta)\bold{A}(\theta))}.
\end{equation}
As the angle $\theta$ to be estimated is only relevant with the second item in \eqref{Eq:EstA}, one can get
\begin{equation} \label{Eq:Esttheta}
\hat{\theta} = \arg\max_{\theta} \frac{\text{tr}^2(\bold{R}_X\bold{A}^H(\theta)\hat{\bold{G}} )}{\text{tr}(\bold{R}_X\bold{A}^H(\theta)\bold{A}(\theta))}.
\end{equation}
Note that $\theta$ cannot be expressed analytically in a closed form. Hence, to obtain numerical results, grid search or golden section search techniques can be employed. On the other hand, the distance $d$ between the target and the BS can be estimated following the the free space propagation law \cite{sturm2011waveform}. Based on the estimated parameters (distance $\hat{d}$ and angle $\hat{\theta}$) and its own location $(x,y)$, the BS can estimate the position of the target denoted by $\bold{\hat{p}} = [\hat{x},\hat{y}]^T$ via
\begin{align} 
\hat{x} &= x + \hat{d} \sin \hat{\theta}, \label{Eq:Estx} \\ 
\hat{y} &= y + \hat{d} \cos \hat{\theta}. \label{Eq:Esty}
\end{align}

 
\section{Numerical Results}
In this section, our proposed schemes for sensing both the point target and extended target with separated or co-located ERs and IRs are evaluated in simulations. The performance is further compared with baseline schemes. Unless otherwise specified, the parameters are set as follows:
\begin{itemize}
\item The system comprises 12 IRs and 12 ERs.
\item At the BS, the number of transmit antennas, $N_t$, is set to 16, and the number of receive antennas, $N_r$, is set to 20.
\item The power budget is set to $P = 30$ dBm.
\item The reflection coefficient is set to $\alpha = 0.01$.
\item Both the communication and radar noise powers are $\sigma_R^2 = \sigma_C^2 = 0$ dBm.
\item Both the communication and radar channels are under i.i.d. Rayleigh fading.
\end{itemize}

\subsection{Performance of Beamforming Design for Point Target Sensing}
For point target sensing, two baseline schemes in \cite{liu2020joint} and \cite{stoica2007probing} are considered. The scheme proposed in \cite{liu2020joint} designs beamformer to minimize a weighted sum of beam-pattern loss and mean-squared cross-correlation pattern, which is known as \emph{weighted sum minimization} (WSM) scheme. The scheme proposed in \cite{stoica2007probing} designs beamformer to achieve a desired 3dB main-beam width in the beam-pattern, which is known as \emph{main-beam width} (MBW) scheme. To ensure the fairness in comparison, the energy transmission constraints are also incorporated into the baseline schemes. 

\begin{figure}[ht]
  \centering
  \subfigure[]{
  \label{CRB-SINR-separated}
  \includegraphics[scale=0.4]{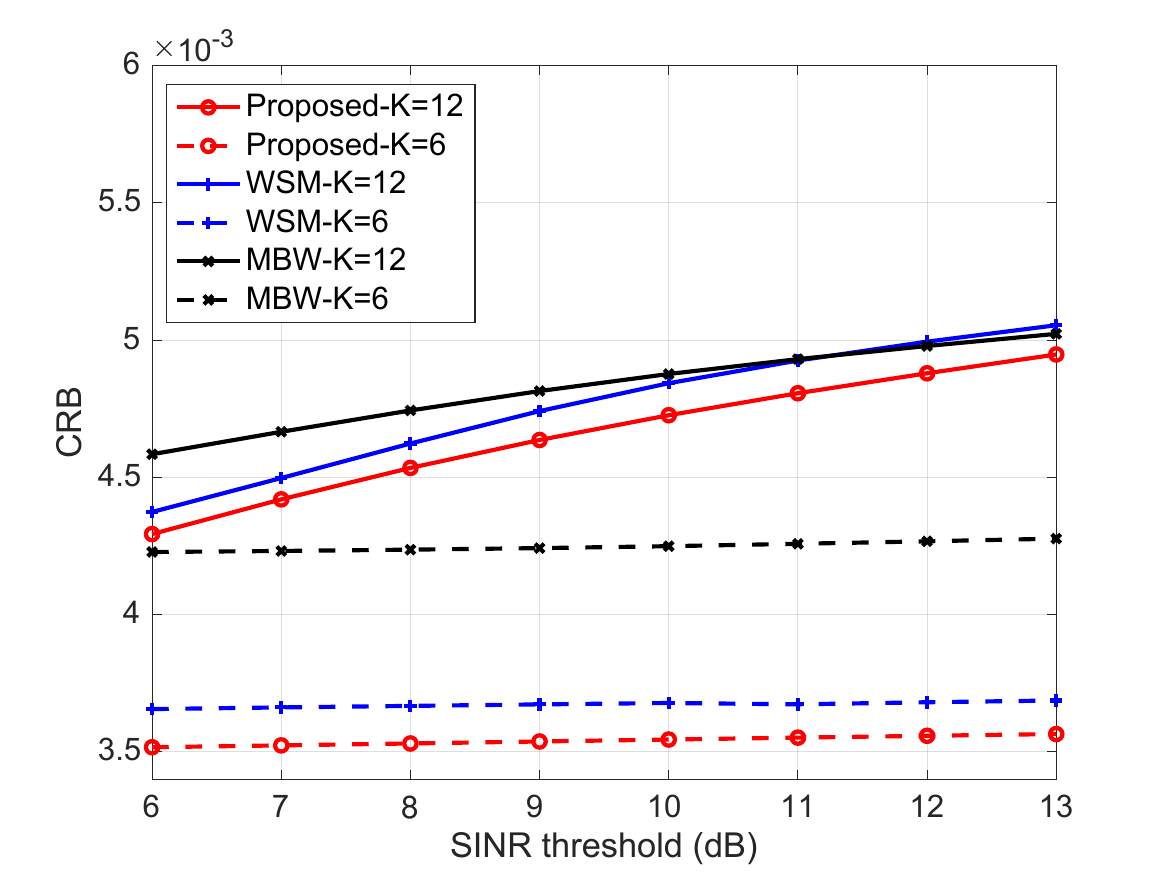}}
  \subfigure[]{
  \label{CRB-SINR-co-located}
  \includegraphics[scale=0.4]{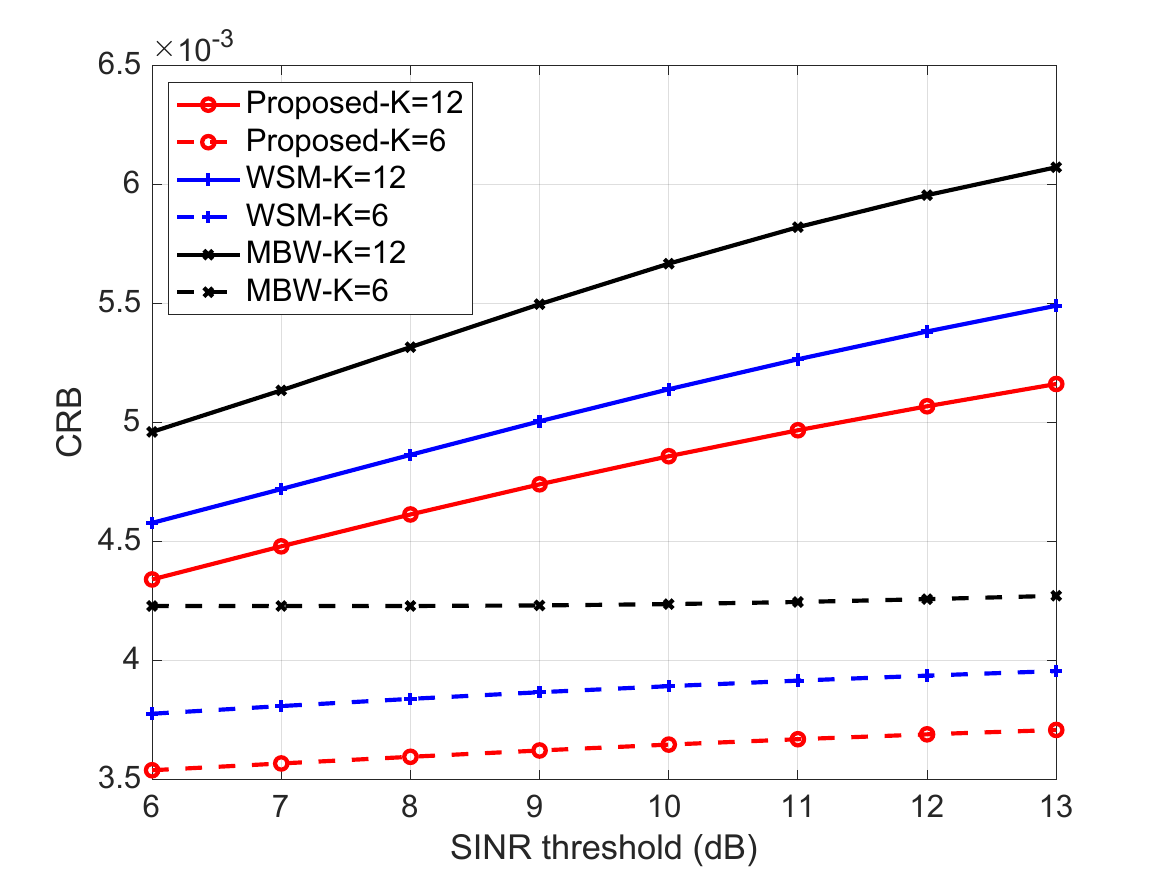}}
  \caption{Sensing CRB versus SINR threshold in a) separated IR/ER case; b) co-located IR/ER case}
  \label{CRB-SINR}
\end{figure}

Fig.~\ref{CRB-SINR} illustrates the sensing CRB versus SINR threshold in both the separated and co-located cases. It can be observed that the sensing CRBs of all schemes increase with growing SINR thresholds, indicating a trade-off between sensing and communication performance. When there are $K = 12$ IRs, the sensing CRB is larger than that of $K = 6$ and increases more rapidly with the increasing SINR threshold. This shows that guaranteeing the communication requirements of more IRs leads to the deteriorated sensing performance. Moreover, our proposed scheme outperforms the two baseline schemes in both the separated and co-located cases.

\begin{figure}[ht]
  \centering
  \subfigure[]{
  \label{CRB-energy-separated}
  \includegraphics[scale=0.4]{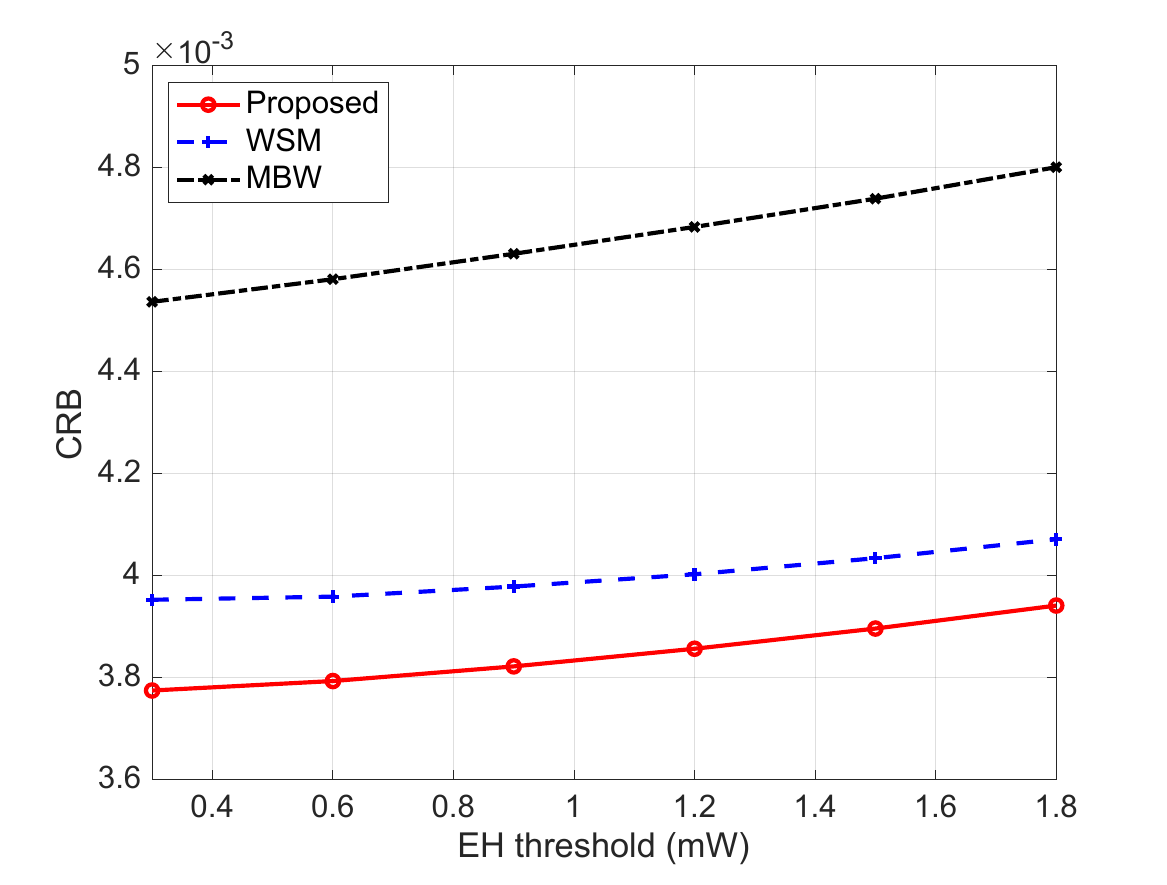}}
  \subfigure[]{
  \label{CRB-energy-co-located}
  \includegraphics[scale=0.4]{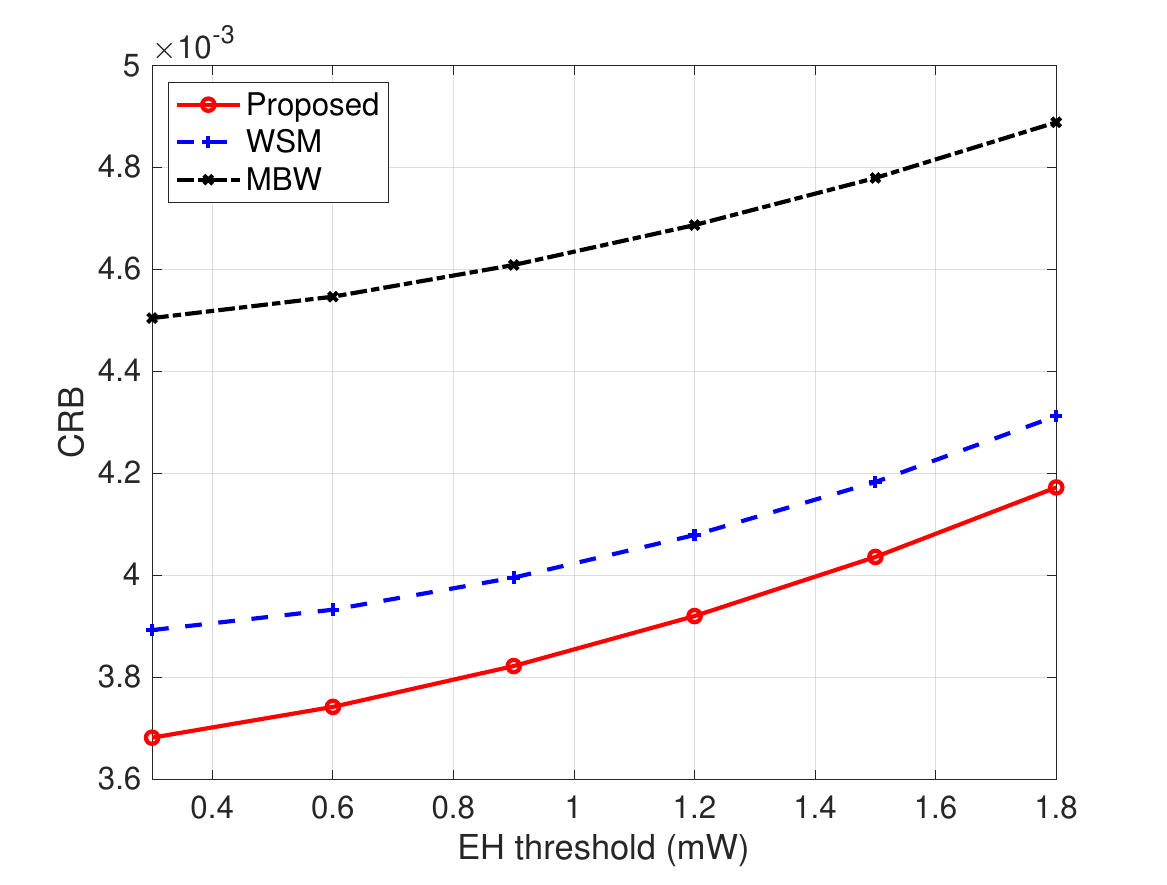}}
  \caption{Sensing CRB versus EH threshold in a) separated IR/ER case; b) co-located IR/ER case}
  \label{CRB-energy}
\end{figure}

Fig.~\ref{CRB-energy} illustrates the sensing CRB versus EH threshold in both the separated and co-located cases. It can be observed that the sensing CRB increases with larger EH threshold, which demonstrates the tradeoff between the sensing and EH performance. Moreover, the sensing CRB increases more rapidly with the increasing EH requirement in the co-located case than in the separated case. This is due to the presence of the PS factor, which amplifies the influence of the EH requirement. 

\begin{figure}[t]
	\centering
	\includegraphics[scale=0.4]{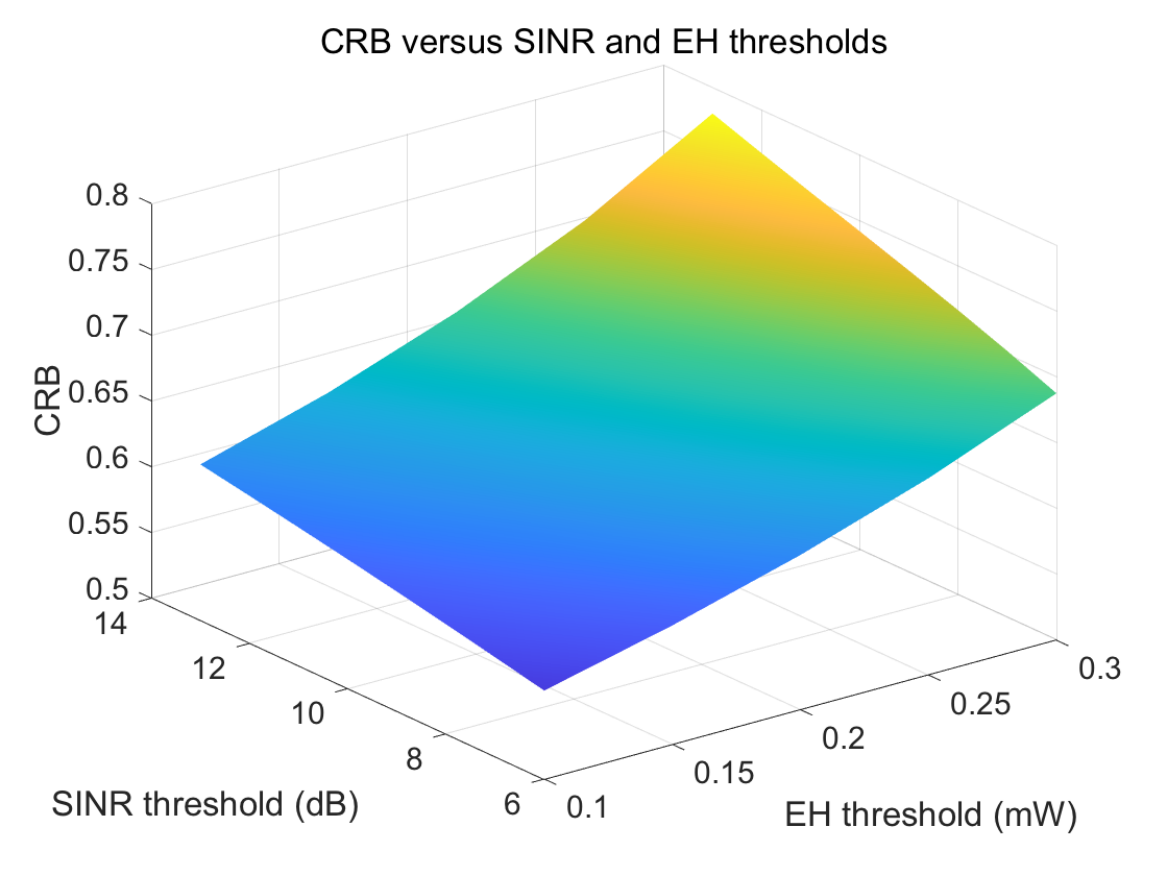}
	\caption{Sensing CRB versus SINR and EH thresholds}
	\label{CRB-3D}
\end{figure}

Fig.~\ref{CRB-3D} shows the sensing CRB versus SINR and EH thresholds in the separated case. It can be observed that sensing CRB increases with the larger SINR and EH thresholds. This illustrates the trade-off among point target sensing, communication, and power transfer performance.

\begin{figure}[t]
	\centering
    \includegraphics[scale=0.4]{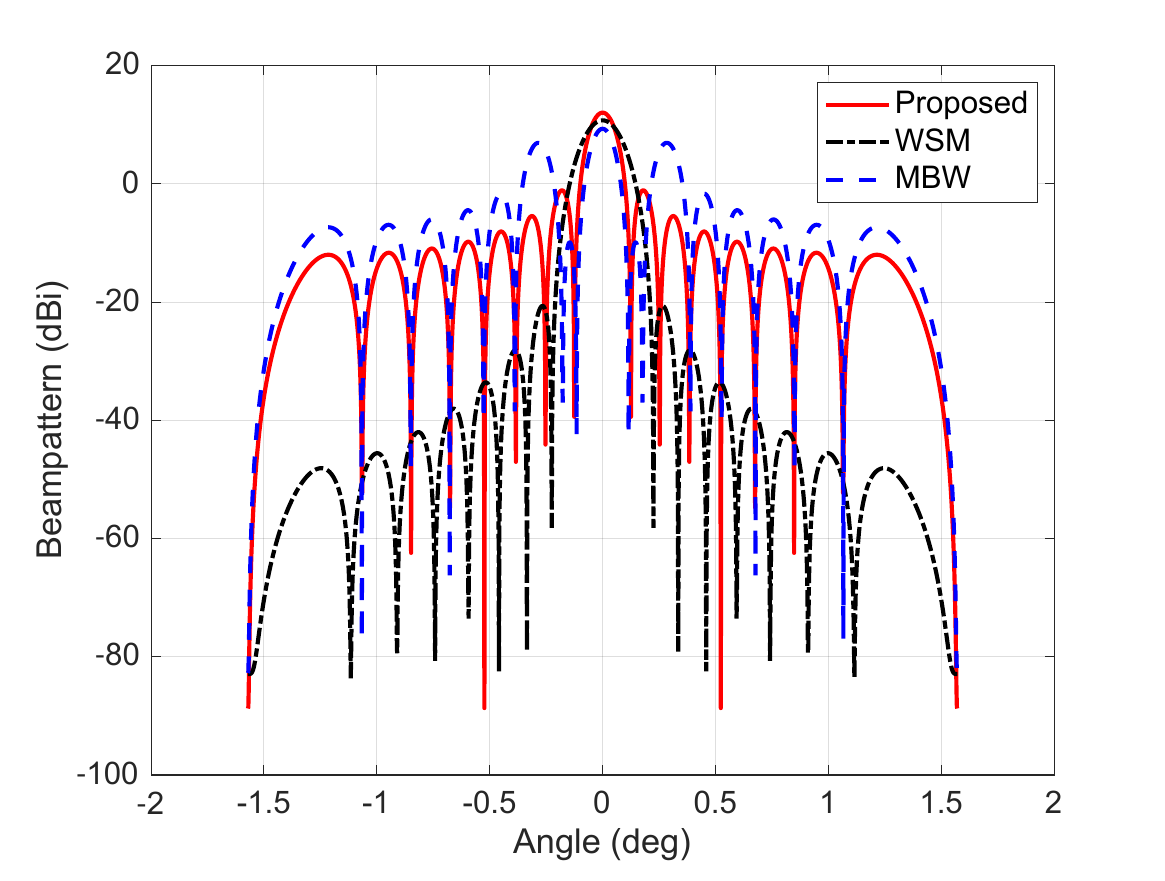}
	\caption{Beampattern for CRB minimization in the separated case}
	\label{Beampattern}
\end{figure}

The beam pattern of our proposed scheme compared with two baselines in the separated case is shown in Fig.~\ref{Beampattern}. It can be observed that all beamformers obtained by these three schemes accurately steer the main lobe towards the degree of 0. Moreover, there exists random fluctuations in the side-lobe regions of all the obtained beampatterns, which depends on the EH and SINR constraints. Among all three schemes, our proposed one exhibits the highest power radiation towards the target angle. 

\subsection{Performance of Beamforming Design for Extended Target Sensing}
The performance of extended target sensing is illustrated in this sub-section. Two beamforming designs based on the convex relaxation bound and eigenvalue decomposition are adopted for performance comparison. The convex relaxation bound refers to the solution of the optimization problem after SDR, whose rank might be more than 1. Eigenvalue decomposition recovers the rank-1 solution by extracting the largest eigenvalue and its corresponding eigenvector of the covariance matrix. Each scheme solves the SDR problem to obtain the covariance matrix and the only difference lies in the process of extracting the rank-1 solution. 

\begin{figure}[ht]
  \centering
  \subfigure[]{
  \label{MSE-SINR-separated}
  \includegraphics[scale=0.4]{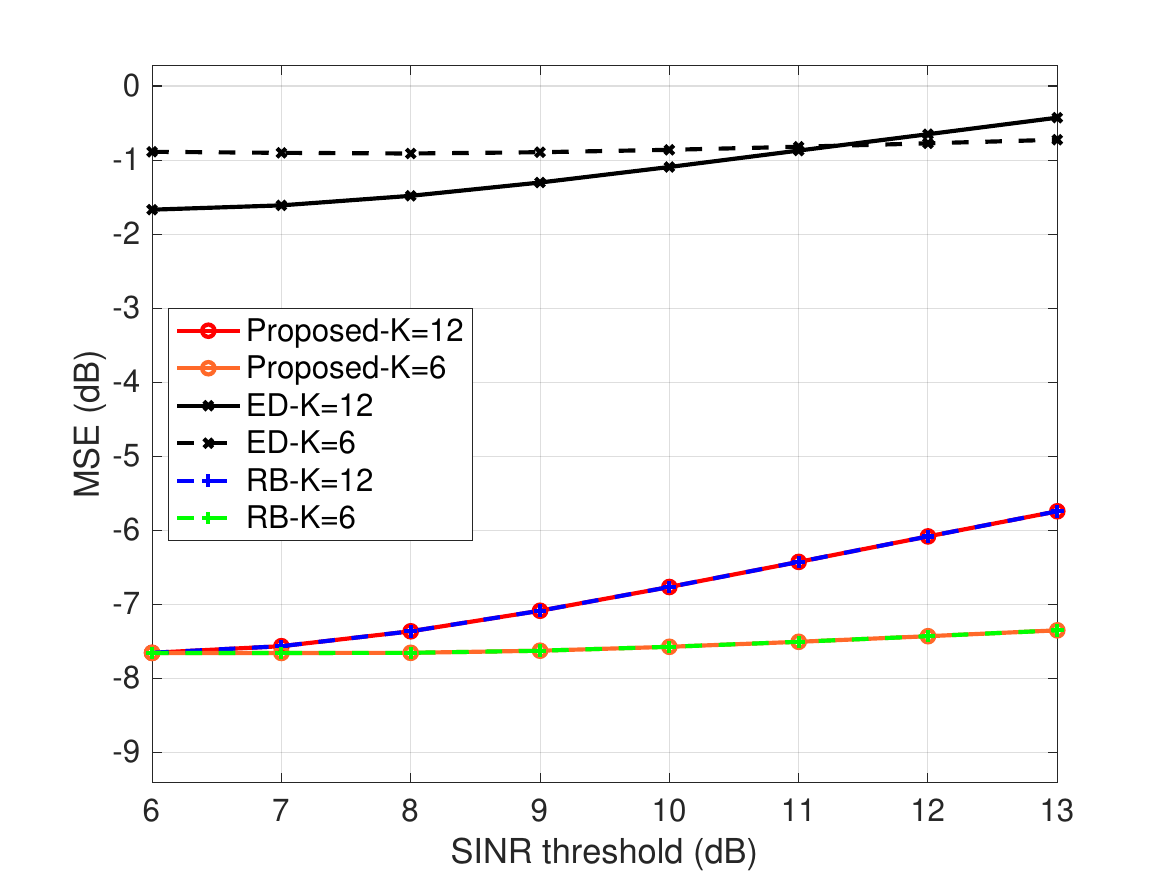}}
  \subfigure[]{
  \label{MSE-SINR-colocated}
  \includegraphics[scale=0.4]{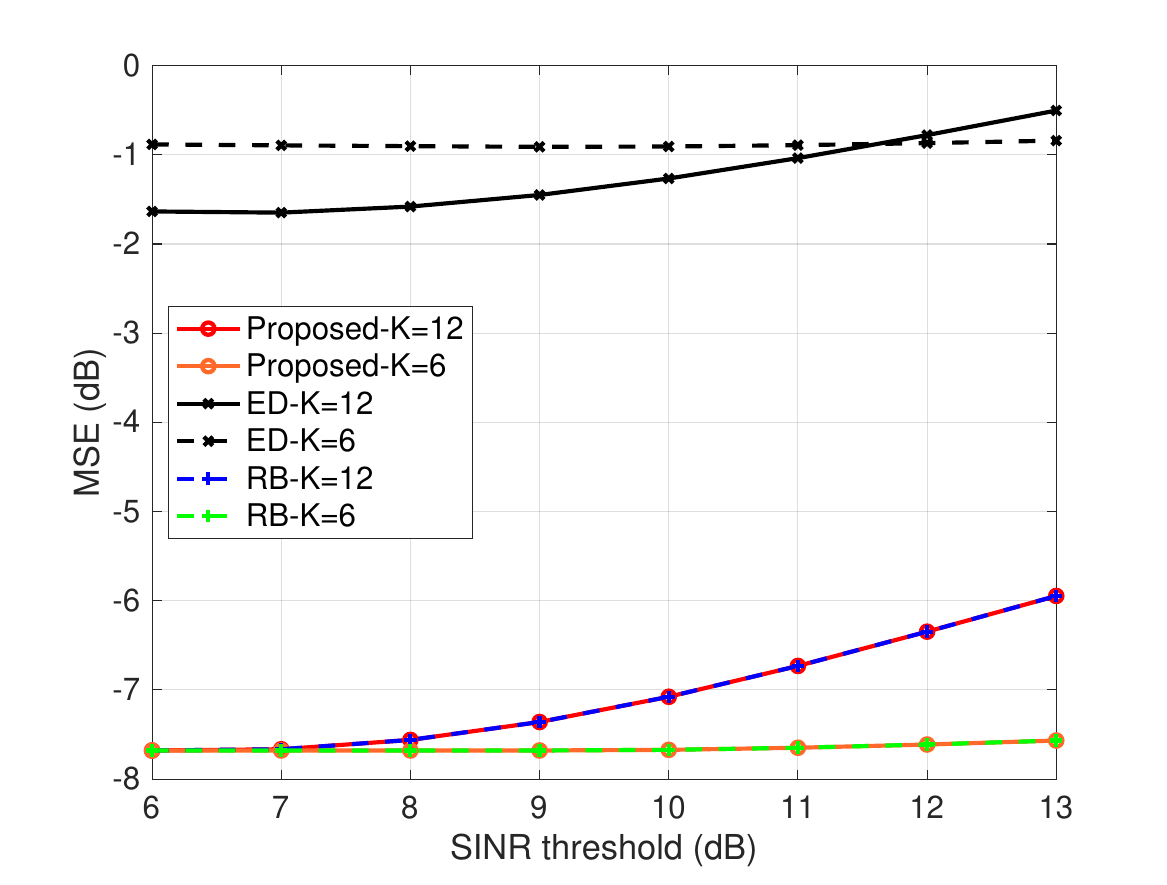}}
  \caption{Sensing MSE versus SINR threshold in a) separated IR/ER case; b) co-located IR/ER case}
  \label{MSE-SINR}
\end{figure}

Fig.~\ref{MSE-SINR} depicts the sensing MSE versus the SINR threshold in both the separated and co-located cases. It can be observed that the sensing MSE increases with larger SINR threshold, which demonstrates the tradeoff between the sensing and communication performance. One can also observe that our proposed scheme significantly outperforms the eigenvalue decomposition method and even achieves the convex relaxation bound, which verifies that the rank-1 solution can be obtained without altering the optimal objective value. Moreover, the sensing MSE at $K = 12$ is larger that that at $K = 6$ in our proposed scheme, as the sensing performance is sacrificed for guaranteeing the communication requirements of more IRs.

\begin{figure}[ht]
  \centering
  \subfigure[]{
  \label{MSE-Energy-colocated}
  \includegraphics[scale=0.4]{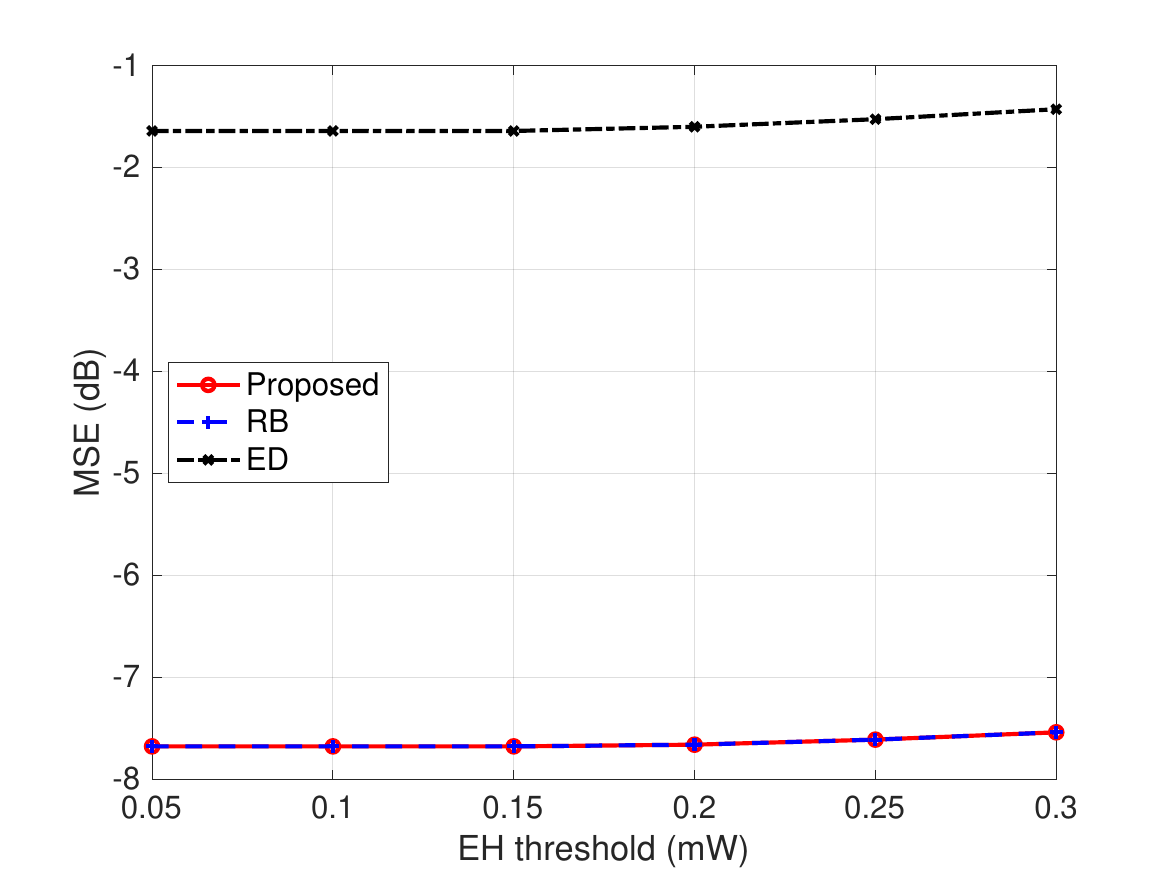}}
  \subfigure[]{
  \label{MSE-Energy-separated}
  \includegraphics[scale=0.4]{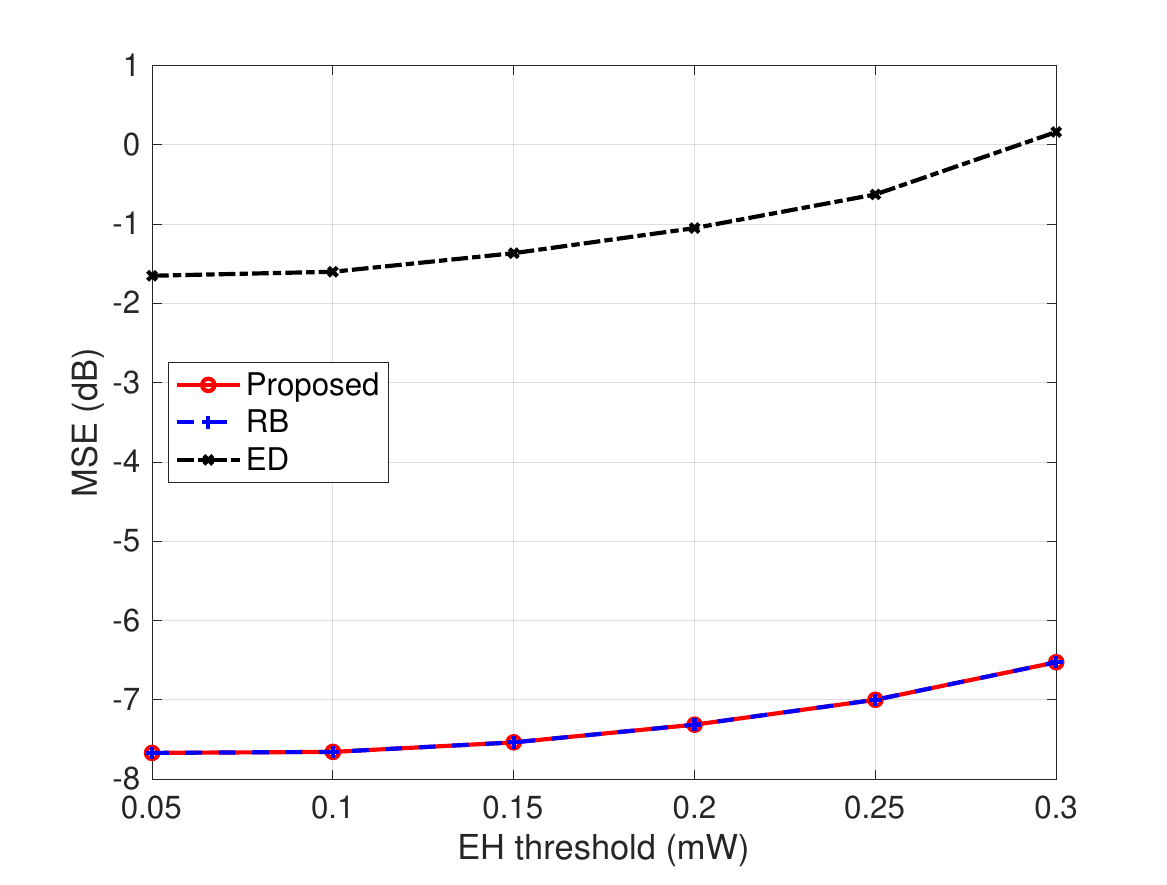}}
  \caption{Sensing MSE versus ER threshold in a) separated IR/ER case; b) co-located IR/ER case}
  \label{MSE-Energy}
\end{figure}

Fig.~\ref{MSE-Energy} shows the sensing MSE versus the EH threshold in both the separated and co-located cases. It can be observed that the sensing MSE increases with larger EH threshold, which demonstrates the tradeoff between the sensing and EH performance. One can also observe that in the co-located case, the sensing MSE increases more rapidly with the larger EH threshold compared to the separated case, which is also due to the existence of the PS factor.

\begin{figure}[t]
	\centering
    \includegraphics[scale=0.4]{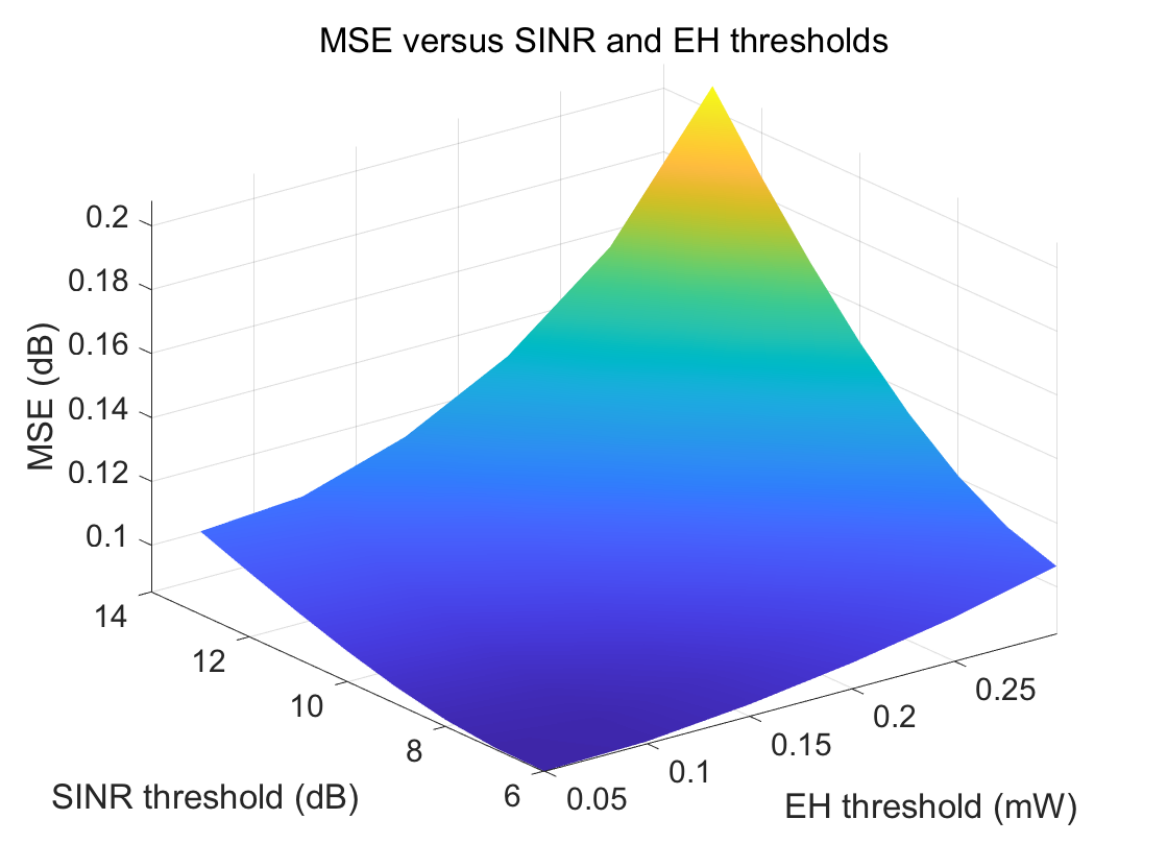}
	\caption{Sensing MSE versus SINR and EH thresholds}
	\label{MSE-3D}
\end{figure}

Fig.~\ref{MSE-3D} shows the sensing MSE versus SINR and EH thresholds  in the separated case. It can be observed that sensing MSE increases with the larger SINR and EH thresholds. This illustrates the trade-off among extended target sensing, communication, and power transfer performance.

\begin{figure}[t]
	\centering
    \includegraphics[scale=0.4]{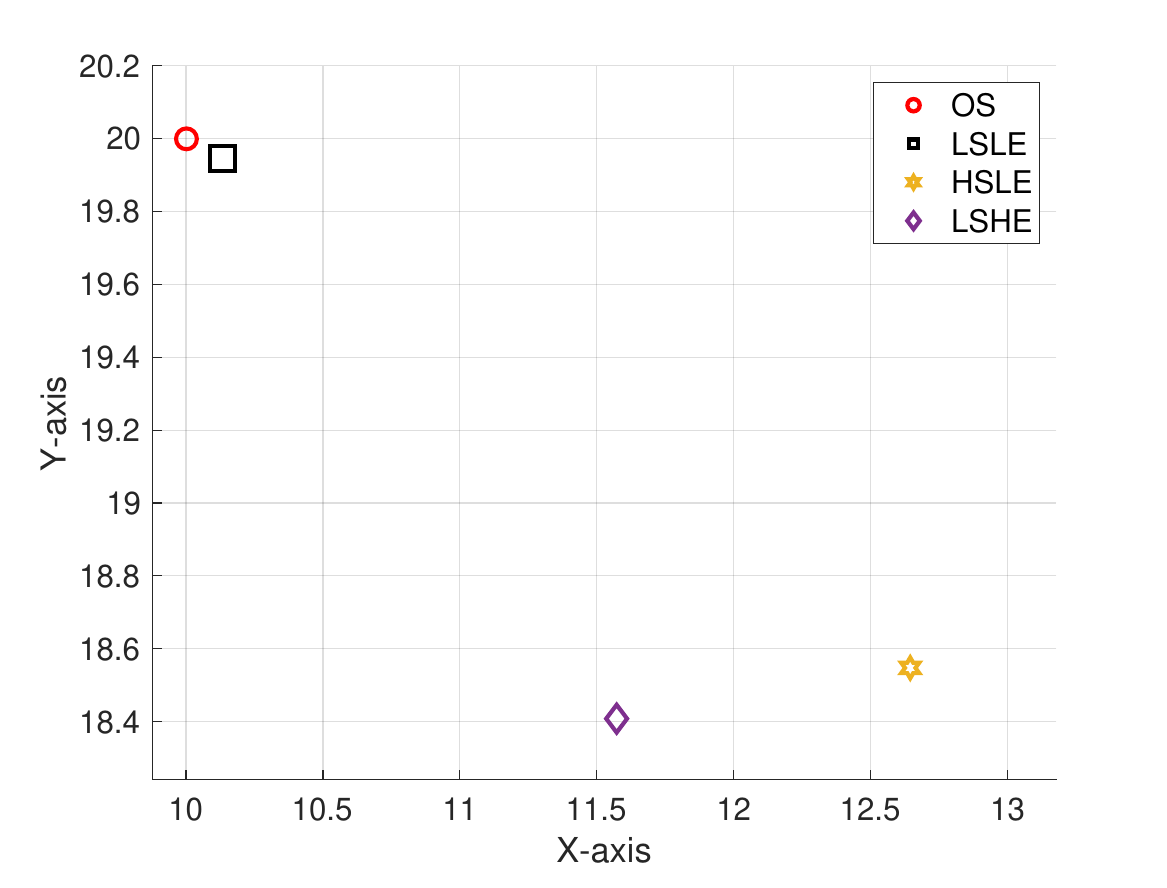}
	\caption{Performance of target positioning based on ISCPT}
	\label{FigTarget}
\end{figure}

\subsection{Target Positioning based on ISCPT}
The performance of target positioning based on ISCPT is illustrated in Fig.~\ref{FigTarget}. The target to be estimated locates at  $(10,20)$, and the BS locates at $(0,0)$. There are $N_{tx} = 16$ transmitting antennas and $N_{rx} = 20$ receiving antennas at the BS, distributed along the y-axis. Four schemes are designed for the performance comparison. The red circle point represents the target location estimated by the radar signals only for sensing, which is known as the \emph{only sensing} (OS) scheme. The black rectangular point represents the target location estimated by the ISCPT signals with low SINR (8 dB) and low EH (0.1 mW) thresholds, which is known as the \emph{low SINR low EH} (LSLE) scheme. The purple diamond point represents the target location estimated by the ISCPT signals with high SINR (12 dB) and low EH (0.1 mW) thresholds, which is known as the \emph{high SINR low EH} (HSLE) scheme. The yellow hexagram point represents the target location estimated by the ISCPT signals with low SINR (8 dB) and high EH (0.2 mW) thresholds, which is known as the \emph{low SINR high EH} (LSHE) scheme.

It can be observed that the OS scheme can perfectly estimate the position of the target, as the signals are purely used for sensing. The LSLE scheme can also realize accurate target positioning, which demonstrates the possibility of achieving high sensing performance while guaranteeing the basic requirements of communication and power transfer. However, when the SINR threshold or the EH threshold is relatively high, the target positioning  in HSLE and LSHE schemes will be less accurate, indicating the tradeoff among the sensing, communication, and power transfer performance.

\section{Conclusion}
In this paper, a multi-user MIMO ISCPT system is designed, where a BS equipped with multiple antennas delivers message to multiple IRs, transfers power to multiple ERs, and senses a target simultaneously. Beamforming designs are optimized to minimize the sensing CRB while satisfying SINR and EH constraints for point target sensing. The non-convex optimization problem is solved using techniques such as Schur complement transformation and rank reduction. When the IRs and ERs are co-located, joint optimization of PS factors and beamformers balances communication and power transfer performance. For extended target sensing, the objective is to minimize sensing MSE, and the problems are solved by leveraging the rank property of the covariance matrix. The performance of ISCPT is illustrated through the investigation of the target positioning problem.
This work contributes to the promising new research area of ISCPT and many interesting follow-up research issues warrant further investigation, such as device scheduling and multiple BSs cooperation.



\bibliographystyle{IEEEtran}

\begin{thebibliography}{10}
\providecommand{\url}[1]{#1}
\csname url@samestyle\endcsname
\providecommand{\newblock}{\relax}
\providecommand{\bibinfo}[2]{#2}
\providecommand{\BIBentrySTDinterwordspacing}{\spaceskip=0pt\relax}
\providecommand{\BIBentryALTinterwordstretchfactor}{4}
\providecommand{\BIBentryALTinterwordspacing}{\spaceskip=\fontdimen2\font plus
\BIBentryALTinterwordstretchfactor\fontdimen3\font minus
  \fontdimen4\font\relax}
\providecommand{\BIBforeignlanguage}[2]{{%
\expandafter\ifx\csname l@#1\endcsname\relax
\typeout{** WARNING: IEEEtran.bst: No hyphenation pattern has been}%
\typeout{** loaded for the language `#1'. Using the pattern for}%
\typeout{** the default language instead.}%
\else
\language=\csname l@#1\endcsname
\fi
#2}}
\providecommand{\BIBdecl}{\relax}
\BIBdecl

\bibitem{li2023multi}
X.~Li, X.~Yi, Z.~Zhou, K.~Han, Z.~Han, and Y.~Gong, ``Multi-user beamforming
  design for integrating sensing, communications, and power transfer,'' in
  \emph{Proc. IEEE WCNC}, Glasgow, UK, 2023.

\bibitem{saad2019vision}
W.~Saad, M.~Bennis, and M.~Chen, ``A vision of {6G} wireless systems:
  {Applications}, trends, technologies, and open research problems,''
  \emph{IEEE Netw.}, vol.~34, no.~3, pp. 134--142, 2019.

\bibitem{cui2021integrating}
Y.~Cui, F.~Liu, X.~Jing, and J.~Mu, ``Integrating sensing and communications
  for ubiquitous {IoT}: {Applications}, trends, and challenges,'' \emph{IEEE
  Netw.}, vol.~35, no.~5, pp. 158--167, 2021.

\bibitem{liu2022integrated}
F.~Liu, Y.~Cui, C.~Masouros, J.~Xu, T.~X. Han, Y.~C. Eldar, and S.~Buzzi,
  ``Integrated sensing and communications: {Towards} dual-functional wireless
  networks for {6G} and beyond,'' \emph{IEEE J. Sel. Areas Commun.}, early
  access, 2022.

\bibitem{zhang2018wireless}
Z.~Zhang, H.~Pang, A.~Georgiadis, and C.~Cecati, ``Wireless power transfer -
  {An} overview,'' \emph{IEEE Trans. Ind. Electron.}, vol.~66, no.~2, pp.
  1044--1058, 2018.

\bibitem{koomey2010implications}
J.~Koomey, S.~Berard, M.~Sanchez, and H.~Wong, ``Implications of historical
  trends in the electrical efficiency of computing,'' \emph{IEEE Ann. Hist.
  Comput.}, vol.~33, no.~3, pp. 46--54, 2010.

\bibitem{zeng2017communications}
Y.~Zeng, B.~Clerckx, and R.~Zhang, ``Communications and signals design for
  wireless power transmission,'' \emph{IEEE Trans. Commun.}, vol.~65, no.~5,
  pp. 2264--2290, 2017.

\bibitem{clerckx2018fundamentals}
B.~Clerckx, R.~Zhang, R.~Schober, D.~W.~K. Ng, D.~I. Kim, and H.~V. Poor,
  ``Fundamentals of wireless information and power transfer: {From RF} energy
  harvester models to signal and system designs,'' \emph{IEEE J. Sel. Areas
  Commun.}, vol.~37, no.~1, pp. 4--33, 2018.

\bibitem{chen2022isac}
Y.~Chen, H.~Hua, and J.~Xu, ``{ISAC} meets {SWIPT}: {Multi-functional} wireless
  systems integrating sensing, communication, and powering,'' \emph{arXiv
  preprint arXiv:2211.10605}, 2022.

\bibitem{liu2022survey}
A.~Liu, Z.~Huang, M.~Li, Y.~Wan, W.~Li, T.~X. Han, C.~Liu, R.~Du, D.~K.~P. Tan,
  J.~Lu \emph{et~al.}, ``A survey on fundamental limits of integrated sensing
  and communication,'' \emph{IEEE Commun. Surveys Tuts.}, vol.~24, no.~2, pp.
  994--1034, 2022.

\bibitem{chiriyath2015inner}
A.~R. Chiriyath, B.~Paul, G.~M. Jacyna, and D.~W. Bliss, ``Inner bounds on
  performance of radar and communications co-existence,'' \emph{IEEE Trans.
  Signal Process.}, vol.~64, no.~2, pp. 464--474, 2015.

\bibitem{liu2018toward}
F.~Liu, L.~Zhou, C.~Masouros, A.~Li, W.~Luo, and A.~Petropulu, ``Toward
  dual-functional radar-communication systems: {Optimal} waveform design,''
  \emph{IEEE Trans. Signal Process.}, vol.~66, no.~16, pp. 4264--4279, 2018.

\bibitem{hua2023mimo}
H.~Hua, T.~X. Han, and J.~Xu, ``{MIMO} integrated sensing and communication:
  {CRB}-rate tradeoff,'' \emph{IEEE Trans. Wireless Commun.}, early access,
  2023.

\bibitem{hua2023optimal}
H.~Hua, J.~Xu, and T.~X. Han, ``Optimal transmit beamforming for integrated
  sensing and communication,'' \emph{IEEE Trans. Veh. Technol.}, vol.~72,
  no.~8, pp. 10\,588--10\,603, 2023.

\bibitem{liu2018mu}
F.~Liu, C.~Masouros, A.~Li, H.~Sun, and L.~Hanzo, ``{MU-MIMO} communications
  with {MIMO} radar: {From} co-existence to joint transmission,'' \emph{IEEE
  Trans. Wireless Commun.}, vol.~17, no.~4, pp. 2755--2770, 2018.

\bibitem{wang2021joint}
X.~Wang, Z.~Fei, J.~Huang, and H.~Yu, ``Joint waveform and discrete phase shift
  design for {RIS}-assisted integrated sensing and communication system under
  {Cram{\'e}r-Rao} bound constraint,'' \emph{IEEE Trans. Veh. Technol.},
  vol.~71, no.~1, pp. 1004--1009, 2022.

\bibitem{zhang2022accelerating}
T.~Zhang, S.~Wang, G.~Li, F.~Liu, G.~Zhu, and R.~Wang, ``Accelerating edge
  intelligence via integrated sensing and communication,'' in \emph{IEEE Int.
  Conf. Commun. (ICC)}, Souel, Korea, 2022.

\bibitem{kumari2017ieee}
P.~Kumari, J.~Choi, N.~Gonz{\'a}lez-Prelcic, and R.~W. Heath, ``{IEEE} 802.11
  ad-based radar: {An} approach to joint vehicular communication-radar
  system,'' \emph{IEEE Trans. Veh. Technol.}, vol.~67, no.~4, pp. 3012--3027,
  2017.

\bibitem{huang2020joint}
Q.~Huang, H.~Chen, and Q.~Zhang, ``Joint design of sensing and communication
  systems for smart homes,'' \emph{IEEE Netw.}, vol.~34, no.~6, pp. 191--197,
  2020.

\bibitem{li2023over}
X.~Li, Y.~Gong, K.~Huang, and Z.~Niu, ``Over-the-air integrated sensing,
  communication, and computation in {IoT} networks,'' \emph{IEEE Wireless
  Commun.}, vol.~30, no.~1, pp. 32--38, 2023.

\bibitem{yuan2020bayesian}
W.~Yuan, F.~Liu, C.~Masouros, J.~Yuan, D.~W.~K. Ng, and
  N.~Gonz{\'a}lez-Prelcic, ``Bayesian predictive beamforming for vehicular
  networks: {A} low-overhead joint radar-communication approach,'' \emph{IEEE
  Trans. Wireless Commun.}, vol.~20, no.~3, pp. 1442--1456, 2020.

\bibitem{lyu2022joint}
Z.~Lyu, G.~Zhu, and J.~Xu, ``Joint maneuver and beamforming design for
  uav-enabled integrated sensing and communication,'' \emph{IEEE Trans.
  Wireless Commun.}, vol.~22, no.~4, pp. 2424--2440, 2022.

\bibitem{li2023integrated}
X.~Li, F.~Liu, Z.~Zhou, G.~Zhu, S.~Wang, K.~Huang, and Y.~Gong, ``Integrated
  sensing, communication, and computation over-the-air: {MIMO} beamforming
  design,'' \emph{IEEE Trans. Wireless Commun.}, vol.~22, no.~8, pp. 5383 --
  5398, 2023.

\bibitem{lu2015wireless}
X.~Lu, P.~Wang, D.~Niyato, D.~I. Kim, and Z.~Han, ``Wireless networks with {RF}
  energy harvesting: A contemporary survey,'' \emph{IEEE Commun. Surveys
  Tuts.}, vol.~17, no.~2, pp. 757--789, 2015.

\bibitem{jabbar2010rf}
H.~Jabbar, Y.~S. Song, and T.~T. Jeong, ``{RF} energy harvesting system and
  circuits for charging of mobile devices,'' \emph{IEEE Trans. Consum.
  Electron.}, vol.~56, no.~1, 2010.

\bibitem{huang2014enabling}
K.~Huang and V.~K. Lau, ``Enabling wireless power transfer in cellular
  networks: {Architecture}, modeling and deployment,'' \emph{IEEE Trans.
  Wireless Commun.}, vol.~13, no.~2, pp. 902--912, 2014.

\bibitem{you2016energy}
C.~You, K.~Huang, and H.~Chae, ``Energy efficient mobile cloud computing
  powered by wireless energy transfer,'' \emph{IEEE J. Sel. Areas Commun.},
  vol.~34, no.~5, pp. 1757--1771, 2016.

\bibitem{li2019wirelessly}
X.~Li, G.~Zhu, Y.~Gong, and K.~Huang, ``Wirelessly powered data aggregation for
  {IoT} via over-the-air function computation: {Beamforming} and power
  control,'' \emph{IEEE Trans. Wireless Commun.}, vol.~18, no.~7, pp.
  3437--3452, 2019.

\bibitem{li2018wirelessly}
X.~Li, C.~You, S.~Andreev, Y.~Gong, and K.~Huang, ``Wirelessly powered crowd
  sensing: {Joint} power transfer, sensing, compression, and transmission,''
  \emph{IEEE J. Sel. Areas Commun.}, vol.~37, no.~2, pp. 391--406, 2018.

\bibitem{li2022wirelessly}
X.~Li, Z.~Han, Z.~Zhou, Q.~Zhang, K.~Huang, and Y.~Gong, ``Wirelessly powered
  integrated sensing and communication,'' in \emph{Proc. ACM MobiCom Workshop
  on ISCS}, Sydney, Australia, 2022.

\bibitem{zhang2013mimo}
R.~Zhang and C.~K. Ho, ``{MIMO} broadcasting for simultaneous wireless
  information and power transfer,'' \emph{IEEE Trans. Wireless Commun.},
  vol.~12, no.~5, pp. 1989--2001, 2013.

\bibitem{xu2014multiuser}
J.~Xu, L.~Liu, and R.~Zhang, ``Multiuser {MISO} beamforming for simultaneous
  wireless information and power transfer,'' \emph{IEEE Trans. Signal
  Process.}, vol.~62, no.~18, pp. 4798--4810, 2014.

\bibitem{popovski2013interactive}
P.~Popovski, A.~M. Fouladgar, and O.~Simeone, ``Interactive joint transfer of
  energy and information,'' \emph{IEEE Trans. Commun.}, vol.~61, no.~5, pp.
  2086--2097, 2013.

\bibitem{park2013joint}
J.~Park and B.~Clerckx, ``Joint wireless information and energy transfer in a
  two-user {MIMO} interference channel,'' \emph{IEEE Trans. Wireless Commun.},
  vol.~12, no.~8, pp. 4210--4221, 2013.

\bibitem{ng2016multiobjective}
D.~W.~K. Ng, E.~S. Lo, and R.~Schober, ``Multiobjective resource allocation for
  secure communication in cognitive radio networks with wireless information
  and power transfer,'' \emph{IEEE Trans. Veh. Technol.}, vol.~65, no.~5, pp.
  3166--3184, 2016.

\bibitem{zeng2022beamforming}
X.~Zeng, L.~Xing, Y.~Wu, and Y.~Shi, ``Beamforming design for integrated
  sensing and {SWIPT} system,'' in \emph{Proc. IEEE PIMRC}, Kyoto, Japan, 2022.

\bibitem{parkvall20205g}
S.~Parkvall, Y.~Blankenship, R.~Blasco, E.~Dahlman, G.~Fodor, S.~Grant,
  E.~Stare, and M.~Stattin, ``{5G NR} release 16: {Start} of the {5G}
  evolution,'' \emph{IEEE Commun. Standards Mag.}, vol.~4, no.~4, pp. 56--63,
  2020.

\bibitem{lin2017experimental}
B.~Lin, X.~Tang, Z.~Ghassemlooy, C.~Lin, and Y.~Li, ``Experimental
  demonstration of an indoor {VLC} positioning system based on {OFDMA},''
  \emph{IEEE Photon. J.}, vol.~9, no.~2, pp. 1--9, 2017.

\bibitem{bai2019camera}
L.~Bai, Y.~Yang, C.~Guo, C.~Feng, and X.~Xu, ``Camera assisted received signal
  strength ratio algorithm for indoor visible light positioning,'' \emph{IEEE
  Commun. Lett.}, vol.~23, no.~11, pp. 2022--2025, 2019.

\bibitem{zhu2017three}
B.~Zhu, J.~Cheng, Y.~Wang, J.~Yan, and J.~Wang, ``Three-dimensional {VLC}
  positioning based on angle difference of arrival with arbitrary tilting angle
  of receiver,'' \emph{IEEE J. Sel. Areas Commun.}, vol.~36, no.~1, pp. 8--22,
  2017.

\bibitem{lee2006improved}
J.~Lee, H.~Song, S.~Oh, and D.~Hong, ``An improved location tracking algorithm
  with velocity estimation in cellular radio networks,'' in \emph{Proc. IEEE
  VTC}, Melbourne, Australia, 2006.

\bibitem{wang2016csi}
X.~Wang, L.~Gao, S.~Mao, and S.~Pandey, ``{CSI}-based fingerprinting for indoor
  localization: {A} deep learning approach,'' \emph{IEEE Trans. Veh. Technol.},
  vol.~66, no.~1, pp. 763--776, 2016.

\bibitem{liu2021cramer}
F.~Liu, Y.-F. Liu, A.~Li, C.~Masouros, and Y.~C. Eldar, ``Cram{\'e}r-rao bound
  optimization for joint radar-communication beamforming,'' \emph{IEEE Trans.
  Signal Process.}, vol.~70, pp. 240--253, 2021.

\bibitem{luo2010semidefinite}
Z.-Q. Luo, W.-K. Ma, A.~M.-C. So, Y.~Ye, and S.~Zhang, ``Semidefinite
  relaxation of quadratic optimization problems,'' \emph{IEEE Signal Process.
  Mag.}, vol.~27, no.~3, pp. 20--34, May 2010.

\bibitem{bekkerman2006target}
I.~Bekkerman and J.~Tabrikian, ``Target detection and localization using {MIMO}
  radars and sonars,'' \emph{IEEE Trans. Signal Process.}, vol.~54, no.~10, pp.
  3873--3883, 2006.

\bibitem{sturm2011waveform}
C.~Sturm and W.~Wiesbeck, ``Waveform design and signal processing aspects for
  fusion of wireless communications and radar sensing,'' \emph{Proc. IEEE},
  vol.~99, no.~7, pp. 1236--1259, 2011.

\bibitem{liu2020joint}
F.~Liu, C.~Masouros, A.~P. Petropulu, H.~Griffiths, and L.~Hanzo, ``Joint radar
  and communication design: {Applications}, state-of-the-art, and the road
  ahead,'' \emph{IEEE Trans. Commun.}, vol.~68, no.~6, pp. 3834--3862, 2020.

\bibitem{stoica2007probing}
P.~Stoica, J.~Li, and Y.~Xie, ``On probing signal design for {MIMO} radar,''
  \emph{IEEE Trans. Signal Process.}, vol.~55, no.~8, pp. 4151--4161, 2007.

\end{thebibliography}

\end{document}